\newcommand{\subtitle}[1]{%
  \posttitle{%
    \par\end{center}
    \begin{center}\large#1\end{center}
    \vskip0.5em}%
}
\newtheorem{definition}{Definition}
\newtheorem{conjecture}{Conjecture}
\newtheorem{proposition}{Proposition}
\newtheorem{corollary}{Corollary}
\newtheorem{lemma}{Lemma}
\newtheorem{thm}{Theorem}
\title{Between a Stone and a Hausdorff Space}
\author{Jingyi Wu\thanks{Department of Philosophy, Logic and Scientific Method, London School of Economics. Jingyi.Wu@lse.ac.uk} \\
James Owen Weatherall\thanks{Department of Logic and Philosophy of Science, UC Irvine. James.Owen.Weatherall@uci.edu}\\
\\
Forthcoming in \textit{The British Journal for the Philosophy of Science}}
\begin{document}
\maketitle
\begin{abstract}We consider the duality between General Relativity and the theory of Einstein algebras, in the extended setting where one permits non-Hausdorff manifolds.  We show that the duality breaks down, and then go on to discuss a sense in which general relativity, formulated using non-Hausdorff manifolds, exhibits excess structure when compared to Einstein algebras.  We discuss how these results bear on a class of algebraically-motivated deflationist views about spacetime ontology.  We conclude with a conjecture concerning non-Hausdorff spacetimes with no bifurcate curves.\end{abstract}

\section{Introduction}\label{sec:intro}

In standard textbook treatments of general relativity (GR),\footnote{Such as \citet{Wald}.  See fn. \ref{fn:background} for further references.} possible universes, or regions thereof, are represented by geometric structures: viz. Lorentzian 4-manifolds, or \textit{relativistic spacetimes}, which are smooth four-dimensional manifolds endowed with a smooth metric.  But as first observed by \citet{geroch1972}, a different formalism is also available.  On this alternative, one would drop all mention of a manifold and instead represent possible (regions of) universes with a certain kind of algebraic structure.  These structures, which Geroch called \textit{Einstein algebras}, consist of a commutative ring satisfying certain conditions and endowed with further structure.  Indeed, he argued, the entire theory of GR could be developed using just the structure of Einstein algebras.

When Geroch first introduced Einstein algebras, he suggested they might be a step towards a quantum theory of gravity.  The idea was that within GR, it is implicitly assumed that events can be treated as ``point-like'', in the sense that they can be localized in space and time with arbitrary precision.  But in a quantum theory, one should not expect arbitrarily precise determinables, including locations.  Thus, Geroch reasoned, point-like events, which play an apparently fundamental role in classical GR, were unlikely to persist in a quantum theory of gravity.  The Einstein algebra framework, meanwhile, removes, or at least suppresses, the ``manifold of point-like events'', thereby removing one of the hurdles to constructing a quantum theory.

But philosophers of physics soon became interested in Einstein algebras for a different reason.  Not long after Geroch's paper first appeared, John \citet{earman1977} suggested that Einstein algebras might cast new light on the classic substantivalist/relationalist debate.\footnote{For an overview of the recent debate, see \citet{10.1093/oxfordhb/9780195392043.013.0016}.}   Roughly speaking, the idea was that standard geometric formulations of spacetime theories are ``substantivalist'', in the sense that one first posits a structure representing space and time -- that is, the manifold -- and then one introduces matter by defining fields on that manifold.\footnote{\citet{Field} famously defends this perspective; see also \citet{Friedman}.}  This approach looks like one according to which spacetime is ontologically independent of, and prior to, matter within spacetime.  Einstein algebras, meanwhile, cut out the first step.  One can think of the elements of an Einstein algebra as possible global configurations of a (scalar) matter field, and so on this approach, one begins by positing the possible configurations of some form of matter, and then proceeds to develop the theory from there, without ever needing to introduce a manifold.  Einstein algebras, in other words, look like a natural formalism for ``relationalism''.\footnote{Earman goes on to argue that Einstein algebras are not fully relationalist, or at least not fully Leibnizian, since they do not satisfy the principal of sufficient reason (PSR): God has no grounds for choosing between the many distinct-but-isomorphic Einstein algebras \citep{earman1977,earman1986}.  One's views on the PSR aside, this worry seems chimerical to us, on the grounds that the apparent multiplicity of Einstein algebras derives entirely from how we construct mathematical objects in set theory, and not from a multiplicity of the structures themselves \citep{Bradley+Weatherall, Halvorson+Manchak}.} It seems that Earman had hoped that Einstein algebras would lead to a formulation of GR that has less structure than the standard geometric formulation, and this algebraic approach would be a way to excise the excess structure in the geometry.

More recently, \citet{Rosenstock+etal} took up Earman's suggestion, but with a twist.  They show, using a generalization of the Stone duality theorem, that there is a sense in which the Einstein algebra and standard geometric formalisms of GR are \emph{equivalent}.\footnote{Their arguments are heavily indebted to prior work by \citet{nestruev2002}.  \citet{Rynasiewicz1992} also responded to Earman's work on Einstein algebras by invoking general Stone-type dualities between functions and space.} Roughly, they show that two Einstein algebras are structurally equivalent (i.e. isomorphic) if and only if their corresponding relativistic spacetimes are structurally equivalent (i.e. isometric). In fact, they show, one can reconstruct a unique manifold from a given algebra of possible matter distributions (in the form of smooth scalar fields), and vice versa. They go on to suggest that this result supports a deflationary reading of the substantivalist/relationalist debate, at least as understood by Earman and others in the late 20th century.  Because of the possibility of unique reconstruction, a manifold can be viewed as (nothing but) a representation of possible configurations of matter, where the points of a manifold are just loci of field (dis)agreement, characterizing the ways in which possible matter configurations might be distinct.  Thus, the ``substantivalist'' formalism of relativistic spacetimes does not posit ``more structure'' than the ``relationalist'' formalism of Einstein algebras. The two formalisms are, in this sense, equivalent.

Here we present several results that, while not necessarily undermining this picture, introduce further texture.  The principal observation is that the duality between Einstein algebras and relativistic spacetimes on which \citet{Rosenstock+etal} base their arguments depends on the topology of the manifolds under consideration.  In particular, if one expands the class of relativistic spacetimes to include ones that violate the Hausdorff condition, the duality no longer holds, because now structurally non-equivalent manifolds can have structurally equivalent algebras of smooth scalar fields. In other words, the Algebra to Geometry direction of the duality fails---we can no longer reconstruct a unique manifold given an algebra of smooth scalar fields.  As we go on to show, this breakdown is asymmetric: equivalent manifolds always have equivalent algebras of smooth scalar fields, and thus the Geometry to Algebra direction of the duality continues to hold in this generalized context. We prove several propositions that further probe the character of the breakdown of duality.

We take these results to be of interest for several reasons.  Perhaps the most important is that they cast the substantivalist/relationalist debate in GR in new light.  As we discuss below, our main results, in Section \ref{sec:main}, suggest a strong sense in which the points of (some) non-Hausdorff manifolds must be interpreted as having an ontological status independent of the possible configurations of matter.  This suggests that non-Hausdorff manifolds require a ``substantivalist'' interpretation.  Moreover, as we go on to show, in Section \ref{sec:deflation}, there is a precise sense in which not-necessarily-Hausdorff spacetimes have strictly more structure than their algebraic counterparts.  Thus something like Earman's original vision is realized: in the not-necessarily-Hausdorff context, the geometric formulation of the theory appears to prefer (or require) a substantivalist interpretation; but one can excise this extra substantivalist structure by moving to an algebraic approach.  But despite vindicating Earman's intuitions, the results, and the version of substantivalism at issue here, looks very different from characterizations of substantivalism in the contemporary literature, for reasons we discuss in Section \ref{sec:conclusion}.

Another reason the results are of interest is that several philosophers of physics have recently argued that it is fruitful, for some purposes, to drop the Hausdorff condition.  \citet{Manchak}, for instance, observes that the Hausdorff condition is often considered as a ``physical reasonableness'' condition on relativistic spacetimes, along with other conditions such as requiring that spacetime be globally hyperbolic or hole-free. Following a suggestion from \citet[p. 173]{Hawking+Ellis}, he argues that there are examples of non-Hausdorff spacetimes, such as Misner spacetime with two extensions, that \emph{are} nonetheless physically reasonable---or at least, are not as easily ruled out as one might think.  \citet{Luc}, too, has argued that non-Hausdorff manifolds pass several litmus tests for physical reasonability, and even suggests that we should not require the Hausdorff condition in general when specifying the models of general relativity.  She proposes that they should be interpreted as representing multiple future possibilities compatible with some initial data set \citep[see also][]{Luc+Placek}, thereby defending a version of \citet{Belnap}'s branching spacetime theory against well-known critiques from \citet{EarmanPruning}.\footnote{See \citet{Belnap+etal} for a recent extended discussion of branching spacetimes that draws on this recent work by Luc and Placek.}  We take the results in the present paper to show one potential cost to such proposals: accepting them blocks the deflationary interpretation of spacetime events (and manifolds) suggested by \citet{Rosenstock+etal}.

Finally, these results show a sense in which expanding a modal space -- that is, the space of possible models allowed by a theory -- has consequences for the sensible interpretation of models representing individual possibilities.  In other words, if one adopts a principle of uniformity of interpretation, then once we consider non-Hausdorff manifolds, the physical significance of points in Hausdorff models has to change as well.

The remainder of the paper will proceed as follows.  We begin in Section \ref{sec:prelim} with some preliminary remarks to fix ideas and notation, leading to a statement of the duality result that we take as our starting point.  We then proceed in Section \ref{sec:main} to state and prove the principal results of the paper, showing that the duality breaks down asymmetrically (Propositions \ref{prop:mainresults} and \ref{geo-alg}).  In Section \ref{sec:deflation}, we revisit the deflationary account in light of these results.  This discussion motivates two further results: first, there is a sense in which dropping the non-Hausdorff condition introduces genuinely new, distinctly algebraic possibilities (Proposition \ref{novelty}); and second, that there is a precise sense in which non-Hausdorff manifolds (and, by extension, spacetimes) have more structure than the algebras of smooth functions on them (Proposition \ref{functor}).  Finally, in Section \ref{sec:conjecture}, we state a conjecture that, if true, would lead to a novel hierarchy of ``non-Hausdorfness'' for manifolds.\footnote{See fn. \ref{referee}. Since we proposed this conjecture, an argument has been given that it is true \citep{dougherty2023hausdorff}.}  We conclude in Section \ref{sec:conclusion} by elaborating on some of the themes mentioned above.  Proofs of propositions appear in an Appendix.

\section{Preliminaries}\label{sec:prelim}

In standard approaches to GR, one begins with a smooth manifold $M$, points of which represent events---localized occurrences---in space and time.\footnote{\label{fn:background} For background on the mathematical structure of GR using notation similar to ours, see \citet{Wald} or \citet{malament2012}; for a treatment directed at mathematicians, see, for instance, \citet{ONeill}.  For more on smooth manifolds, also directed at mathematicians, see \citet{Lee}.}  This manifold is assumed to be equipped with a smooth pseudo-Riemannian metric $g_{ab}$ of Lorentzian signature, encoding spatiotemporal relations between events, including the effects of gravitation.  The metric is the principal dynamical field in the theory, and standard treatments generally focus on properties of the metric and, especially, solutions to Einstein's equation, which relates the metric to the distribution of energy-momentum in space and time.  For our purposes, however, the metric will play no role.  We focus just on the manifold.

The most general definition of a (smooth) manifold is that it is a topological space $M$ endowed with a smooth \emph{atlas} $\mathcal{C}$. An atlas, here, is a collection of $n-$charts $(U,\varphi)$, where $U$ is an open subset of $M$ and $\varphi$ is a homeomorphism from $U$ to $\mathbb{R}^n$, for fixed $n$, satisfying certain compatibility and maximality conditions.  This atlas may be seen to induce a ``smoothness structure'' on $M$, in the sense that it determines which maps from $M$ to other manifolds are smooth.\footnote{See \citet[Ch. 1]{Lee} or \citet[Ch. 1]{ONeill}. Note that our approach here differs from that of, say, \citet[\S 1.1]{malament2012} or \citet[Ch. 1]{Wald} in that we assume $M$ is equipped with a topological structure, rather than allowing the atlas to induce a topology.  The reason for this is two-fold.  First, the Hausdorff condition (for example) is naturally understood as a condition on topological spaces, and it is awkward to discuss instead conditions on atlases that induce Hausdorff topologies.  And second, Malament and Wald define ``smoothness'' in a very compact way that, unfortunately, becomes more complicated once one moves to non-Hausdorff spaces, because smooth maps on their definition are not necessarily continuous. (We are grateful to an anonymous reviewer for pointing this out to us.) We prefer to avoid these complications, and so we officially adopt the conventions of \citet{Lee}, though our notation is often closer to that of Malament or Wald.} The number $n$, here, is the \emph{dimension} of the manifold; below we will use the expression ``$n-$manifold'' as shorthand for ``$n$ dimensional manifold''.  When there is no possibility of ambiguity, we will also follow standard practice and suppress reference to atlases, and write instead of a ``manifold $M$''.

In standard presentations, more is often required of manifolds than we have required thus far.  In particular, the topological space $M$ is assumed to satisfy several additional conditions.\footnote{These assumptions are so common that many physicists and mathematical physicists work them into the definition of manifold and then proceed to ignore them. For instance, \citet{Lee}, \citet{Wald}, and \citet{Hawking+Ellis} all assume the Hausdorff and paracompactness conditions; and \citet{malament2012} assumes the Hausdorff condition and, later, the Countable Cover Condition.  Wald's Appendix A  provides a nice discussion of some of these conditions; see also \citet[Appendix]{Manchak}.}  Of particular interest for the present paper is that $M$ is standardly assumed to satisfy the \emph{Hausdorff} condition.
\begin{definition}
    A topological space $X$ is \emph{Hausdorff} if for any points $x,y$, there exist open sets $U$ and $V$ such that $x\in U$, $y\in V$, and $U\cap V = \emptyset$.
\end{definition}
\noindent Other conditions are also standard.  For instance, It is also common to assume that $M$ is \emph{paracompact}.
\begin{definition}
    A topological space $X$ is \emph{paracompact} if any open cover admits a locally finite refinement.\footnote{With less jargon: $X$ is paracompact if, for any collection of open sets $\{O_{\alpha}\}_{\alpha\in A}$ satisfying $\cup_{\alpha\in A}O_{\alpha}=M$, there exists a collection of open sets $\{V_{\beta}\}_{\beta\in B}$ such that (a) for every $\beta\in B$, there exists $\alpha\in A$ for which $V_{\beta}\subseteq O_{\alpha}$, (b) $\cup_{\beta\in B} V_{\beta} = M$, and (c) for every point $p\in M$, there exists an open set $W$ containing $p$ that has non-empty intersection with only finitely many elements of $\{V_{\beta}\}_{\beta\in B}$.}
\end{definition}
\noindent For connected manifolds, in the presence of the Hausdorff condition, paracompactness is equivalent to another condition that is also sometimes assumed (instead), known as \emph{second countability}.\footnote{For a clean proof of the equivalence claim, see \citet{Tanaka}.  Note that in some sources, e.g. \citet{malament2012}, one finds the following ``countable cover'' condition imposed on the atlas associated with a manifold: there exists a countable collection of charts $\{(U_i,\varphi_i)\}_{i\in I}$ such that $\cup_i U_i = M$.  This condition is not a topological condition as stated, but if it holds, then the manifold topology induced by the atlas is second countable.  This result holds independently of other topological conditions, including the Hausdorff condition; it follows from the facts that $\mathbb{R}^n$ is second countable and the manifold topology makes chart maps continuous.}
\begin{definition}
    A topological space $X$ is \emph{second countable} if its topology admits a countable base.\footnote{That is, if there exists a countable collection of open sets $\{O_i\}_{i\in I}$ such that for any open set $U$, there exists $J\subseteq I$ for which $\cup_{i\in J}O_i = U$.}
\end{definition}

The condition we relax in Section \ref{sec:main} is the Hausdorff condition. This means that we will consider smooth manifolds in the general sense described above, without also assuming that the underlying topology is Hausdorff. We will call such manifolds ``not-necessarily-Hausdorff''.  It might be instructive to think about the Hausdorff condition by thinking about when it fails. In that case, we would have points on the manifold that open sets cannot separate. A classic example of a non-Hausdorff space can be constructed from two  lines with their standard manifold structures. We identify every point on these two real lines except for the origin. What we are left with is a real line with two origins. These two origins cannot be separated by any two charts.

As for the other topological conditions, we take the following approach.  Most of the propositions we state and prove below involve existence claims.  The examples we give to prove them generally do (or can) satisfy the paracompactness and second-countability conditions.  In this sense we do not relax those conditions.  Note, however, that Prop. \ref{geo-alg}, which is a universal claim, does not require \emph{any} of the topological conditions we have stated here.  It holds for not-necessarily-Hausdorff manifolds, as stated; but also for not-necessarily-paracompact and not-necessarily-second-countable manifolds.

We now turn to the algebraic side of things.  As we noted in the introduction, our motivation for pursuing the questions discussed here is to explore interpretational issues raised by the Einstein algebra reformulation of GR.  On that approach, the subject matter of GR is represented by algebras satisfying certain further conditions and endowed with further structure that plays the role of a metric and which satisfies dynamical equations.\footnote{For more about Einstein algebras and their physical interpretation, see \citet{geroch1972}, \citet{Heller}, or \citet{Rosenstock+etal}.  Further details about the mathematical structures described here can be found in \citet{nestruev2002}.}  But just as we focused only on manifolds in our discussion above, for present purposes the full mathematics and physics of Einstein algebras will play no role.  Only the algebras themselves will be relevant.

To motivate the algebraic approach, consider a smooth manifold $M$, and let $C^\infty(M)$ denote the collection of smooth maps from $M$ to $\mathbb{R}$.\footnote{For present purposes, the topological considerations above play no role, so we do not assume them.}  This collection turns out to inherit some nice algebraic properties from the structure of $M$. In particular, $C^{\infty}(M)$ forms an Abelian group under pointwise addition, with identity given by the constant function $0$; it forms a commutative (associative and unital) ring over that group under pointwise multiplication, with multiplicative identity given by the constant function $1$; and it forms a real algebra over that ring via pointwise scalar multiplication by constant functions.  Many structures one can define on a manifold $M$ can be re-expressed as structures on $C^{\infty}(M)$.  For instance, a smooth vector field is an $\mathbb{R}$-linear function from $C^{\infty}(M)$ to itself satisfying the Leibniz rule: $\xi(fg) = f\xi(g) + g\xi(f)$ for all $f,g\in C^{\infty}(M)$.  And so on.

This construction provides the basic idea behind the Einstein algebra approach.  One begins with an algebra, which can be represented by $C^{\infty}(M)$ for some manifold $M$.  This algebra is taken to represent possible configurations of (scalar) matter or properties of matter.  Other types of matter fields and possible metrics are represented as tensor fields, which are defined in terms of their action on algebra elements.  One can further define structures such as stress-energy tensors, derivative operators, and curvature tensors on the algebra; and one can state Einstein's equation.  In this way, one can drop all reference to the manifold at all.

How are these algebras related to relativistic spacetimes?  Consider, first, the case of smooth, Hausdorff, paracompact manifolds.  In that case, the relationship between manifolds $M$ and the algebras $C^{\infty}(M)$ associated to them is very strong.  We have the following fundamental result.
\begin{thm}[Generalized Stone]\label{stone}
Given two (smooth, Hausdorff, paracompact) manifolds $M$ and $M'$, $M$ and $M'$ are diffeomorphic if and only if $C^\infty(M)$ and $C^\infty(M')$ are isomorphic.
\end{thm}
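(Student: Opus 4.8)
The plan is to prove the forward implication directly and to devote all the effort to the converse, which I would establish by reconstructing the manifold $M$ --- first its points, then its topology, then its smooth structure --- from the $\mathbb{R}$-algebra $C^\infty(M)$. For the forward direction, a diffeomorphism $F\colon M\to M'$ induces the pullback $F^*\colon C^\infty(M')\to C^\infty(M)$, $f\mapsto f\circ F$, which respects pointwise addition, multiplication, and scalar multiplication and sends $1$ to $1$; so $F^*$ is an $\mathbb{R}$-algebra homomorphism, and $(F^{-1})^*$ is a two-sided inverse, making $F^*$ an isomorphism. All the work is in the converse.

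First I would recover the \emph{points} of $M$. Each $p\in M$ determines a character (an $\mathbb{R}$-algebra homomorphism $C^\infty(M)\to\mathbb{R}$) $\mathrm{ev}_p\colon f\mapsto f(p)$, with kernel the maximal ideal $\mathfrak{m}_p=\{f\in C^\infty(M):f(p)=0\}$. The key lemma --- and, I expect, the main obstacle --- is that \emph{every} character $\chi\colon C^\infty(M)\to\mathbb{R}$ equals $\mathrm{ev}_p$ for a unique $p$. Uniqueness is easy: by the Hausdorff condition any two distinct points are separated by a smooth bump function, so $p\mapsto\mathrm{ev}_p$ is injective. For existence, set $\mathfrak{m}=\ker\chi$ (a proper ideal, since $\chi$ is onto $\mathbb{R}$); it suffices to find a point at which every element of $\mathfrak{m}$ vanishes, for then $\mathfrak{m}\subseteq\mathfrak{m}_p$, and maximality forces $\mathfrak{m}=\mathfrak{m}_p$ and hence $\chi=\mathrm{ev}_p$. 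Here is where the topological hypotheses are indispensable: paracompactness (equivalently, for connected manifolds, second countability) supplies a proper smooth exhaustion function $h\colon M\to\mathbb{R}$, so that $h^{-1}(t)$ is compact for every $t\in\mathbb{R}$. Putting $t=\chi(h)$ gives $h-t\cdot 1\in\mathfrak{m}$. If no point of the compact set $K=h^{-1}(t)$ were a common zero of $\mathfrak{m}$, then by a routine compactness argument there would be finitely many $f_1,\dots,f_k\in\mathfrak{m}$ with $f_1^2+\cdots+f_k^2>0$ on an open set containing $K$; but then $(h-t)^2+f_1^2+\cdots+f_k^2$ belongs to $\mathfrak{m}$ and is strictly positive on all of $M$ --- near $K$ because of the $f_j$, and away from $K$ because there $h\neq t$ --- hence is a unit, contradicting that $\mathfrak{m}$ is proper. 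So some $p\in K$ is a common zero of $\mathfrak{m}$, as required. It is exactly this step that collapses without the topological hypotheses: on the long line, for instance, every smooth function is eventually constant, and ``evaluation at the missing endpoint'' is then a character given by no point.

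With the points identified, an algebra isomorphism $\Phi\colon C^\infty(M)\to C^\infty(M')$ carries characters to characters and so induces a bijection $\phi\colon M'\to M$ characterized by $\mathrm{ev}_{p'}\circ\Phi=\mathrm{ev}_{\phi(p')}$; equivalently $\Phi(f)=f\circ\phi$, so that $\Phi$ \emph{is} the pullback $\phi^*$ and $\Phi^{-1}=(\phi^{-1})^*$. I would then verify that $\phi$ is a homeomorphism: given open $U\subseteq M$ and $p'\in\phi^{-1}(U)$, pick a smooth bump function $f$ with $f(\phi(p'))=1$ and $f\equiv 0$ off $U$; then $\Phi(f)=f\circ\phi$ is continuous on $M'$ with value $1$ at $p'$, so $\{q'\in M':\Phi(f)(q')>1/2\}$ is an open neighbourhood of $p'$ contained in $\phi^{-1}(U)$, which shows $\phi$ is continuous, and by symmetry so is $\phi^{-1}$. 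Finally I would invoke the standard fact that a continuous map between smooth manifolds is smooth precisely when it pulls smooth functions back to smooth functions (the nontrivial direction being local and using that coordinate functions extend to global smooth functions). Since $\phi$ is continuous and $\phi^*=\Phi$ maps $C^\infty(M)$ into $C^\infty(M')$, the map $\phi$ is smooth; the same reasoning applied to $\phi^{-1}$ and $\Phi^{-1}$ shows $\phi^{-1}$ is smooth. Hence $\phi$ is a diffeomorphism, completing the proof.

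In short, the substantive content is the identification of the characters of $C^\infty(M)$ (equivalently, of its maximal ideals of the appropriate kind) with the points of $M$; this is what makes the reconstruction of $M$ from $C^\infty(M)$ possible, and it is precisely the ingredient that fails once the standing topological assumptions are relaxed --- a failure that the later sections of the paper exploit in the non-Hausdorff setting.
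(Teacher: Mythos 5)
The paper does not prove this theorem itself---it simply cites \citet{Rosenstock+etal}, whose argument (following \citet{nestruev2002}) is exactly the reconstruction you give: identify points with $\mathbb{R}$-algebra characters (equivalently, the right maximal ideals), recover the topology via bump functions, and recover the smooth structure via the fact that a continuous map is smooth iff it pulls back smooth functions to smooth functions. So your proposal is essentially the same approach as the proof the paper relies on, and the forward direction, the character-existence argument, and the passage from the algebra isomorphism to a diffeomorphism are all sound. One caveat: your existence step for characters uses a proper smooth exhaustion function, which exists only for $\sigma$-compact manifolds, i.e.\ Hausdorff manifolds with at most countably many components; a paracompact manifold may have uncountably many components, and then no proper $h$ exists. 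As stated, your argument therefore proves the theorem for connected (or second countable) manifolds, which is the setting the cited reconstruction effectively works in; to handle arbitrary paracompact manifolds one must first use the idempotents given by characteristic functions of components to reduce a character to (countably many) components, which involves the standard realcompactness/measurable-cardinal subtlety that the literature here also glosses over. This does not affect any use the paper makes of the theorem, since all of its examples are connected.
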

\begin{proof}
See \citet{Rosenstock+etal}.
\end{proof}
This Theorem says that if two algebras associated with smooth, Hausdorff, paracompact manifolds are isomorphic, then the manifolds they are associated to are diffeomorphic; and, conversely, if the algebras are \emph{not} isomorphic then the manifolds are not diffeomorphic.  This provides a sense in which the entire structure of a smooth manifold is captured by its algebra of smooth functions.

In fact, even more is true. As \citet{nestruev2002} show and \citet{Rosenstock+etal} discuss, one can identify necessary and sufficient conditions on an algebra $A$ for it to be isomorphic to $C^{\infty}(M)$ for some smooth, Hausdorff, paracompact manifold $M$.  Moreover, given such an algebra one can reconstruct a smooth, Hausdorff, paracompact manifold, unique up to diffeomorphism, that gives rise to that algebra by the construction above.  These results show that one can describe the entire subject matter of GR beginning with purely algebraic structures, satisfying certain conditions and endowed with further structure, in such a way that the standard geometric presentation arises as a certain kind of representation of the algebraic structures as algebras of smooth functions on a particular $M$.

Furthermore, we have a certain precise sense in which GR, formulated in the standard way using Lorentzian manifolds, is theoretically equivalent to a theory that represents spatiotemporal structure using Einstein algebras.\footnote{For a recent overview of the theoretical literature, including this particular notion of equivalence, see \citet{weatherallTE1,weatherallTE2}.  \citet{weatherall2020} describes some of the limitations of this notion of equivalence.} As \citet{Rosenstock+etal} show, one can establish that the relationship between smooth manifolds and their associated algebras is (contravariantly) functorial in the sense that the relationships between manifolds and algebras lifts to a natural map from smooth maps to algebra morphisms, and vice versa; and that the functors determined in this way form a categorical equivalence between a category of smooth, Hausdorff, paracompact manifolds and a certain category of algebras. This equivalence can be extended to spacetimes (i.e., manifolds with metrics) and Einstein algebras, in such a way that it preserves the empirical content of both theories on a natural understanding of what that means.

We now proceed to ask: how much of this picture extends to the case where one considers not-necessarily-Hausdorff manifolds?  In particular, is it the case that if not-necessarily Hausdorff manifolds give rise to isomorphic algebras of smooth functions, then they are diffeomorphic?  As we will see in the next section, the answer to this question is ``no''.

\section{Duality Spoiled}\label{sec:main}
As indicated above, the main results of this section may be summarized as follows.  If we allow for not-necessarily-Hausdorff manifolds, then there exist non-diffeomorphic manifolds with isomorphic algebras of smooth maps.  In other words, if we consider the duality results discussed above in this new context, we find that the Algebra to Geometry direction of the duality breaks down, in the sense that an algebra of smooth functions does not uniquely determine a smooth manifold, even up to diffeomorphism.  However, as we show below, the Geometry to Algebra direction still holds, even in the absence of the Hausdorff condition: if two not-necessarily-Hausdorff manifolds are diffeomorphic, then their algebras of smooth functions are isomorphic.

To establish the first result, we will begin with a construction by which one takes an arbitrary Hausdorff manifold, $M$, and generates a new (non-Hausdorff) manifold, $N$, by ``adding a point''.   We will then show that $M$ and $N$ are not diffeomorphic.  Finally, we show that $C^{\infty}(M)$ and $C^{\infty}(N)$ are nonetheless isomorphic.  The result that there exist non-diffeomorphic not-necessarily-Hausdorff manifolds with isomorphic algebras of smooth functions follows immediately.

We begin with our basic construction. Let $(M,\mathcal{C})$ be a smooth, not-necessarily-Hausdorff $n-$manifold and let $p$ be an arbitrary point of $M$. Let $N$ be the set $M\cup \{p'\}$, where $p'\not\in M$; and let $\mathcal{B}$ be the collection of subsets of $N$ consisting of the open sets of $M$ and all sets of the form $ \{p'\}\cup (O\setminus \{p\})$, where $O$ is open in $M$ and contains $p$. Since $\mathcal{B}$ is a $\pi$-system (i.e., it is closed under finite intersections, since the topology on $M$ is), its closure under arbitrary unions forms a topology for $N$.  Assume $N$ is endowed with this topology.  Observe that every $n-$chart in $\mathcal{C}$ is automatically an $n-$chart on $N$.  Now, for each chart $(U,\varphi)\in \mathcal{C}$ containing $p$ in its domain, let $U'$ be the (open) set $U'=\{p'\}\cup (U\setminus\{p\})$ and let $\varphi':U'\rightarrow\mathbb{R}^n$ be the homeomorphism defined by $\varphi':q\mapsto \varphi(q)$ if $q\neq p'$ and $q\mapsto\varphi(p)$ if $q=p'$.  Then $(U',\varphi')$ is also an $n-$chart on $N$; call it the ``mirror chart'' determined by $(U,\varphi)$.  Finally, let $\mathcal{C}'$ consist of the collection of all $n-$charts on $N$ compatible with both $\mathcal{C}$ and all mirror charts on $N$.  We have the following lemma.

\begin{lemma}\label{lem1}
$(N,\mathcal{C'})$ as just defined is a smooth, non-Hausdorff manifold.  Moreover, it is second countable if $(M,\mathcal{C})$ is; and it is paracompact if $(M,\mathcal{C})$ is.
\end{lemma}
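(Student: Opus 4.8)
The plan is to establish the four parts of the statement in turn: that $(N,\mathcal{C}')$ is (i) a smooth $n$-manifold, (ii) non-Hausdorff, (iii) second countable whenever $(M,\mathcal{C})$ is, and (iv) paracompact whenever $(M,\mathcal{C})$ is. Parts (iii) and (iv) concern only the topology on $N$, so there the atlas plays no role; only (i) requires care with $\mathcal{C}'$. For (i), I would first show that $\mathcal{A}:=\mathcal{C}\cup\{\text{mirror charts}\}$ is a smooth atlas on $N$ and then invoke the standard fact that any smooth atlas extends to a unique maximal one --- the set of all charts compatible with it, which is itself an atlas because two charts each compatible with a common smooth atlas are mutually compatible (smoothness of transition maps being a local condition); this maximal atlas is exactly $\mathcal{C}'$, so $(N,\mathcal{C}')$ is a smooth $n$-manifold. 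Showing $\mathcal{A}$ is a smooth atlas amounts to a few checks. It covers $N$, since $\mathcal{C}$ covers $M$ and any mirror chart contains $p'$. Each $(U,\varphi)\in\mathcal{C}$ is still a chart on $N$: $U$ is open in $N$, and the subspace topology it inherits from $N$ coincides with the one it inherits from $M$, since intersecting either kind of basic open set of $N$ with $U$ gives an open subset of $M$ (using that $\{p\}$ is closed in $M$, locally Euclidean spaces being $T_1$), so $\varphi$ is still a homeomorphism onto $\mathbb{R}^n$. Each mirror chart $(U',\varphi')$ is a chart: $U'$ is open in $N$ by construction, $\varphi'$ is a bijection onto $\mathbb{R}^n$, and it is a homeomorphism because the sets $\{p'\}\cup(O\setminus\{p\})$ with $O\subseteq U$ open in $M$ and $p\in O$ form a neighbourhood basis at $p'$ and $\varphi'$ carries each onto $\varphi(O)$, matching the behaviour of $\varphi$ near $p$. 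Finally --- the one computational step --- every transition map is smooth: transitions between two charts of $\mathcal{C}$ are smooth since $\mathcal{C}$ is a smooth atlas; for $(V,\psi)\in\mathcal{C}$ and a mirror chart $(U',\varphi')$ the overlap $V\cap U'=(V\cap U)\setminus\{p\}$ lies in $M$ and $\varphi'=\varphi$ there, so the transition map merely restricts $\psi\circ\varphi^{-1}$; and for mirror charts $(U_1',\varphi_1')$, $(U_2',\varphi_2')$ arising from $(U_1,\varphi_1)$, $(U_2,\varphi_2)$, the map $\varphi_2'\circ(\varphi_1')^{-1}$ agrees with $\varphi_2\circ\varphi_1^{-1}$ throughout $\varphi_1(U_1\cap U_2)$, including at $\varphi_1(p)=\varphi_1'(p')$ because $\varphi_2'(p')=\varphi_2(p)$. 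So every transition map is a restriction of, or a seamless extension of, a smooth map already present in $\mathcal{C}$.

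For (ii), I would argue directly that $p$ and $p'$ admit no disjoint neighbourhoods. An open set of $N$ containing $p$ contains a basic open set $W$ that is open in $M$ and contains $p$ (no basic set of the form $\{p'\}\cup(O\setminus\{p\})$ contains $p$), and an open set of $N$ containing $p'$ contains a basic set $\{p'\}\cup(O\setminus\{p\})$ with $O$ open in $M$ and $p\in O$; then $W\cap O$ is open in $M$ and contains $p$, hence contains some $q\neq p$ (as $n\geq 1$, so $p$ is not isolated), and $q$ lies in both neighbourhoods. Thus $N$ is not Hausdorff; and since the notion of manifold in force imposes no further topological conditions, $(N,\mathcal{C}')$ is a non-Hausdorff manifold.

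For (iii) and (iv), the crucial structural observation is that $N$ is the union of two open subsets each homeomorphic to $M$: one is $M$ itself (open in $N$, being a union of charts), the other is $M'':=N\setminus\{p\}=\{p'\}\cup(M\setminus\{p\})$ (a basic open set of $N$), with $M\cap M''=M\setminus\{p\}$, $M\setminus M''=\{p\}$, and $M''\setminus M=\{p'\}$; the bijection $f\colon M''\to M$ fixing $M\setminus\{p\}$ and sending $p'\mapsto p$ is a homeomorphism, as one checks by tracking the two kinds of basic open sets ($\{p'\}\cup(O\setminus\{p\})\mapsto O$). Given this, (iii) is immediate: a countable base $\{O_i\}$ for $M$, together with the sets $\{p'\}\cup(O_i\setminus\{p\})$ for those $i$ with $p\in O_i$, is a countable base for $N$. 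For (iv), let $\mathcal{U}$ be an open cover of $N$; using paracompactness of $M$ pick a locally-finite-in-$M$ open refinement $\mathcal{V}_1$ of $\{U\cap M:U\in\mathcal{U}\}$, and using paracompactness of $M''\cong M$ pick a locally-finite-in-$M''$ open refinement $\mathcal{V}_2$ of $\{U\cap M'':U\in\mathcal{U}\}$. Then $\mathcal{V}_1\cup\mathcal{V}_2$ is an open cover of $N$ refining $\mathcal{U}$ (its members are open in $N$ since $M$ and $M''$ are), and I would verify it is locally finite: at a point of $M\setminus\{p\}$, which lies in both $M$ and $M''$, intersect a suitable $M$-neighbourhood with a suitable $M''$-neighbourhood; at $p$, take an $M$-neighbourhood $A$ of $p$ meeting finitely many members of $\mathcal{V}_1$ and a basic $M''$-neighbourhood $\{p'\}\cup(O_0\setminus\{p\})$ of $p'$ meeting finitely many members of $\mathcal{V}_2$, and observe that the $M$-open set $A\cap O_0$ is a neighbourhood of $p$ meeting no more members of $\mathcal{V}_2$ than $\{p'\}\cup(O_0\setminus\{p\})$ does, since each such member lies in $M''$; the point $p'$ is handled symmetrically.

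I expect part (iv) to be the main obstacle. The transition-map bookkeeping in (i), though the longest part, is routine once one sees that every transition map involving a mirror chart just restricts, or extends without seam, a transition map already present in $\mathcal{C}$. Paracompactness is the genuinely delicate point: the usual devices for Hausdorff manifolds --- metrizability, exhaustion by compacta --- are unavailable, and even the familiar fact that a point has a neighbourhood with compact closure can fail for non-Hausdorff manifolds (as on a line with countably many origins), so a more structural argument is needed. Writing $N=M\cup M''$ as two overlapping open copies of $M$ is what keeps the argument elementary, and the only subtlety is local finiteness at the two inseparable points $p$ and $p'$, where one uses the correspondence between a basic neighbourhood of $p'$ in $M''$ and an open neighbourhood of $p$ in $M$.
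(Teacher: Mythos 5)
Your proof is correct, and for the smoothness, non-Hausdorffness, and second-countability claims it follows essentially the same route as the paper: the same two-case check of transition maps (chart of $\mathcal{C}$ against mirror chart, mirror chart against mirror chart) followed by passage to the maximal atlas, the same direct clash of a basic neighbourhood of $p'$ with a neighbourhood of $p$, and the same augmented countable base. (You additionally verify that charts of $\mathcal{C}$ remain charts on $N$ and that mirror charts are homeomorphisms; the paper folds these into the construction itself, so your checks are harmless extra rigor.) The one genuinely different step is paracompactness. The paper works with a single copy of $M$: it ``reflects'' the members of the given cover of $N$ that contain $p'$ down to open sets containing $p$, refines the resulting cover of $M$ once, reflects back those refinement members not contained in any original cover element, and adds one set $O_p$ to recover $p$; local finiteness is then checked at $p'$ via the correspondence of neighbourhoods. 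You instead write $N$ as the union of two open copies of $M$, namely $M$ and $M''=N\setminus\{p\}\cong M$, refine the cover separately on each copy, and take the union, with the only delicate point being local finiteness at $p$ and $p'$, which you handle by pulling a basic neighbourhood of $p'$ back to an open neighbourhood of $p$ (and symmetrically) --- and that verification is sound, since every member of $\mathcal{V}_2$ avoids $p$ and every member of $\mathcal{V}_1$ avoids $p'$. Your decomposition avoids the paper's bookkeeping about which refinement members must be reflected back (and the attendant check that those members contain $p$ so that their reflections are open), at the mild cost of invoking paracompactness of $M$ twice and of having to interleave two locally finite families at points of $M\setminus\{p\}$; the paper's single-refinement argument is more economical in that respect but requires the more careful selection step. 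Both arguments ultimately rest on the same observation, namely that basic neighbourhoods of $p'$ and open neighbourhoods of $p$ correspond bijectively.
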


We will call this manifold $(N,\mathcal{C}')$ as ``$M$ with an `additional' $p$'' (and, again, often suppress reference to $\mathcal{C}'$).  Observe, now, that if $M$ is Hausdorff, the manifold that results by adding any ``additional'' $p$ is not diffeomorphic to the manifold with which we began.  Before proving this, we first introduce a definition and another lemma.

\begin{definition}
Let $M$ be a non-Hausdorff manifold. Call points $p_1,p_2\in M$ \emph{witness points} if, given any open sets $O_1$ and $O_2$ containing $p_1$ and $p_2$, respectively, $O_1\cap O_2$ is non-empty.
\end{definition}

\noindent Observe that for any manifold $M$ with an ``additional'' $p$, the points $p$ and the additional point $p'$ are witness points.  We now have the following two results.

\begin{lemma}\label{witnessprop}
If $p_1$ and $p_2$ are witness points of some smooth, non-Hausdorff manifold $M$, then for any smooth scalar field $f: M\to\mathbb{R}$, $f(p_1)=f(p_2)$.
\end{lemma}

\begin{lemma}\label{diffeo}
Let $M$ be a smooth, Hausdorff $n-$manifold, let $p\in M$ be some point, and let $N$ be $M$ with an ``additional'' $p$.  Then $M$ and $N$ are not diffeomorphic.
\end{lemma}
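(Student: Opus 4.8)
The plan is to derive a contradiction from the assumption that there is a diffeomorphism $\psi: M \to N$. The key structural difference to exploit is that $N$ contains a pair of witness points, namely $p$ and the additional point $p'$, whereas $M$, being Hausdorff, contains no witness points at all. So the strategy is: first, observe that being a pair of witness points is a property preserved (in both directions) by any diffeomorphism; second, conclude that $M$ would have to contain a pair of witness points; third, note that this is impossible in a Hausdorff space.

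More concretely, I would proceed as follows. Suppose $\psi: M \to N$ is a diffeomorphism. In particular $\psi$ is a homeomorphism, so it is surjective, and there are points $q_1, q_2 \in M$ with $\psi(q_1) = p$ and $\psi(q_2) = p'$; since $p \neq p'$ and $\psi$ is injective, $q_1 \neq q_2$. Now $M$ is Hausdorff, so there exist disjoint open sets $O_1 \ni q_1$ and $O_2 \ni q_2$. Because $\psi$ is an open map (it is a homeomorphism), $\psi[O_1]$ and $\psi[O_2]$ are open subsets of $N$ containing $p$ and $p'$ respectively, and because $\psi$ is injective, $\psi[O_1] \cap \psi[O_2] = \psi[O_1 \cap O_2] = \psi[\emptyset] = \emptyset$. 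But this contradicts the fact, noted in the text immediately after the definition of witness points, that $p$ and $p'$ are witness points of $N$: every pair of open neighborhoods of $p$ and $p'$ has nonempty intersection. Hence no such $\psi$ exists, and $M$ and $N$ are not diffeomorphic.

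I do not expect any serious obstacle here; the argument is essentially a one-step topological observation once the notion of witness points is in hand. The only point requiring minor care is making sure the direction of the argument is right — we only need that a homeomorphism carries a separating pair of neighborhoods in $M$ to a separating pair in $N$, which is immediate from openness and injectivity, so we never need to invoke the smooth structure or Lemma \ref{witnessprop} at all. (Indeed, the conclusion already follows at the level of homeomorphism, which is why the lemma is stated for diffeomorphism only incidentally.) If one wanted to avoid even mentioning surjectivity explicitly, one could instead argue from the well-definedness of the mirror-chart construction, but the witness-point route is cleaner and reuses machinery already set up in the excerpt.
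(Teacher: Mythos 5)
Your proof is correct and is essentially the paper's own argument: both exploit that $p$ and $p'$ are witness points of $N$ while $M$, being Hausdorff, admits disjoint separating neighborhoods, deriving a contradiction from a purported diffeomorphism (the paper works with the map $N\to M$ and pulls disjoint open sets back by continuity, whereas you work with $M\to N$ and push them forward by openness and injectivity — the same argument up to inverting the map). Your side remark that only the homeomorphism-level structure is used also matches the paper's proof, which likewise invokes only continuity and injectivity.
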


What we have established thus far is that given any smooth Hausdorff manifold, we can construct a non-diffeomorphic, non-Hausdorff manifold from it by adding an ``additional'' point to it.  Note that in Lemma \ref{diffeo}, we assume $M$ to be Hausdorff. We might wonder: does the Lemma hold if $M$ is non-Hausdorff? Not in general. In fact, we have counterexamples of it both when $M$ is non-Hausdorff and non-paracompact,\footnote{Let $M$ be a Hausdorff manifold with $p$ duplicated countably infinite many times. $M$ and $M$ with $p$ duplicated one more time are diffeomorphic.} and when $M$ is non-Hausdorff and paracompact.\footnote{Let $M$ be $\mathbb{R}$ with all natural numbers duplicated once. $M$ and $M$ with an ``additional'' $-1$ are diffeomorphic. Thanks to an anonymous referee for this counterexample.} But for some non-Hausdorff manifolds and some $p$, the Lemma does hold.\footnote{Let $M$ be a Hausdorff manifold with an additional $q$. $M$ and $M$ with an ``additional'' distinct point $p$ are not diffeomorphic.}

The final step in the main result is to establish that any smooth not-necessarily-Hausdorff manifold $M$ has an algebra of smooth functions that is isomorphic to that of $M$ with an ``additional'' point.

\begin{lemma}\label{isoprop}
Let $M$ be a smooth, not-necessarily-Hausdorff manifold and let $N$ be $M$ with some ``additional'' point $p$.  Then $C^{\infty}(M) \cong C^\infty(N)$.
\end{lemma}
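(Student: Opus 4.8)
The plan is to exhibit an explicit $\mathbb{R}$-algebra isomorphism between $C^\infty(M)$ and $C^\infty(N)$ given by restriction and extension along the set inclusion $M\hookrightarrow N$. First I would define the restriction map $\rho\colon C^\infty(N)\to C^\infty(M)$ by $\rho(f)=f|_M$. This is well defined because, by construction of $(N,\mathcal C')$, every chart of $(M,\mathcal C)$ is also a chart of $N$; hence the coordinate representations of $f|_M$ in $M$-charts are just restrictions of coordinate representations of $f$, and so are smooth. And $\rho$ is manifestly an $\mathbb{R}$-algebra homomorphism, since the operations on both algebras are defined pointwise.

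Next I would define the candidate inverse, the extension map $\epsilon\colon C^\infty(M)\to C^\infty(N)$, sending $g$ to the function $\tilde g$ determined by $\tilde g|_M=g$ and $\tilde g(p')=g(p)$. The point that needs checking is that $\tilde g$ really is smooth on $N$. It suffices to verify this on the atlas consisting of the charts of $\mathcal C$ together with the mirror charts, since (as in the proof of Lemma \ref{lem1}) these cover $N$ and are pairwise compatible, hence determine the smooth structure $\mathcal C'$. On any chart $(U,\varphi)\in\mathcal C$ the domain lies in $M$ and $\tilde g$ agrees with $g$ there, so its coordinate representation is smooth. On a mirror chart $(U',\varphi')$ determined by some $(U,\varphi)\in\mathcal C$ containing $p$, a direct computation shows $\tilde g\circ(\varphi')^{-1}=g\circ\varphi^{-1}$ on all of $\varphi(U)=\varphi'(U')$: for $x\neq\varphi(p)$ both sides equal $g(\varphi^{-1}(x))$, while at $x=\varphi(p)$ the left side is $\tilde g(p')=g(p)=g(\varphi^{-1}(\varphi(p)))$. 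Since $g\in C^\infty(M)$ this is smooth, so $\tilde g\in C^\infty(N)$, and $\epsilon$ is again an $\mathbb{R}$-algebra homomorphism.

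Finally I would check that $\rho$ and $\epsilon$ are mutually inverse. One direction is immediate: $\rho(\epsilon(g))=\tilde g|_M=g$. For the other direction, recall that $p$ and the additional point $p'$ are witness points of $N$, as noted just after the definition of witness points; hence by Lemma \ref{witnessprop} every $f\in C^\infty(N)$ satisfies $f(p')=f(p)$. Therefore $\epsilon(\rho(f))$ is the function on $N$ that agrees with $f$ on $M$ and takes the value $f(p)=f(p')$ at $p'$, which is precisely $f$. Thus $\rho$ is an $\mathbb{R}$-algebra isomorphism with inverse $\epsilon$, and $C^\infty(M)\cong C^\infty(N)$.

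The only step requiring genuine care is the smoothness of $\tilde g$ across the mirror charts; everything else is bookkeeping. It is worth flagging that the argument hinges entirely on Lemma \ref{witnessprop}: smooth functions are forced to take equal values at $p$ and $p'$, so adjoining $p'$ adds no new elements to, and removes none from, the algebra of smooth functions — which is exactly why this collapse is possible only once the Hausdorff condition is dropped.
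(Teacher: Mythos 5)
Your proof is correct and follows essentially the same route as the paper's: you construct the extension of each $f\in C^\infty(M)$ to $N$ via the mirror charts, use Lemma \ref{witnessprop} to see that every smooth function on $N$ arises this way, and conclude that the extension (equivalently, restriction) map is an algebra isomorphism. Packaging the argument as a pair of mutually inverse homomorphisms $\rho$ and $\epsilon$ is only a presentational difference from the paper, which defines the image $S$ of the extension map and shows $S=C^\infty(N)$.
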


The main result follows immediately from the foregoing lemmas.  We have established
\begin{proposition}\label{prop:mainresults}
There exist non-diffeomorphic (smooth, second countable, paracompact) not-necessarily-Hausdorff manifolds with isomorphic algebras of smooth functions.
\end{proposition}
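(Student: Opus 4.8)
The plan is to instantiate the construction developed in the preceding lemmas at a single concrete example and then assemble the pieces. Take $M$ to be $\mathbb{R}$ (or, if one prefers, $\mathbb{R}^n$) equipped with its standard smooth atlas; this is a smooth, Hausdorff, second countable, paracompact manifold. Fix $p = 0 \in M$, and let $N$ be $M$ with an ``additional'' $p$, in the sense of the construction given just before Lemma \ref{lem1}. (For $M = \mathbb{R}$ this $N$ is precisely the ``line with two origins'' mentioned in Section \ref{sec:intro}.) The claim is that the pair $(M,N)$ witnesses the proposition.

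To see this, I would simply invoke the three relevant lemmas in turn. By Lemma \ref{lem1}, $N$ is a smooth, non-Hausdorff manifold, and it is second countable and paracompact because $M$ is; hence both $M$ and $N$ are smooth, second countable, paracompact, not-necessarily-Hausdorff manifolds (the first one Hausdorff, the second not). By Lemma \ref{diffeo}, since $M$ is Hausdorff, $M$ and $N$ are not diffeomorphic. By Lemma \ref{isoprop}, $C^{\infty}(M) \cong C^{\infty}(N)$. Combining these three facts gives exactly the asserted existence statement.

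So at the level of the proposition there is essentially no obstacle: it is an immediate corollary of the lemmas, and the real work sits inside them. The two pressure points worth flagging are Lemmas \ref{isoprop} and \ref{diffeo}. For Lemma \ref{isoprop}, the natural isomorphism is restriction, $f \mapsto f|_M$ (using that $M$ is an open submanifold of $N$, since the charts of $\mathcal{C}$ are charts on $N$), with inverse sending $g \in C^{\infty}(M)$ to the function on $N$ that equals $g$ on $M$ and takes the value $g(p)$ at the extra point $p'$; smoothness of this extension is checked in a mirror chart, where it has the same coordinate representative as $g$, and the fact that the two maps are mutually inverse uses Lemma \ref{witnessprop} to see that every $f \in C^{\infty}(N)$ already satisfies $f(p) = f(p')$, so no data is lost under restriction and the extension is unambiguous. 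Checking that restriction respects pointwise addition, multiplication, and scalar multiplication is then routine. For Lemma \ref{diffeo}, one wants to exclude \emph{every} diffeomorphism $M \to N$; the natural argument observes that $N$ contains a pair of distinct witness points (namely $p$ and $p'$), that a Hausdorff space contains no such pair, and that ``possessing a pair of distinct witness points'' is a purely topological property, hence preserved by any homeomorphism, so that a diffeomorphism $M \to N$ would make $M$ non-Hausdorff, a contradiction. The hypothesis that $M$ be Hausdorff is doing genuine work here, as the footnoted counterexamples to the non-Hausdorff analogue make clear.
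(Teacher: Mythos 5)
Your proposal is correct and matches the paper's own proof, which likewise obtains the proposition as an immediate consequence of Lemmas \ref{lem1}, \ref{diffeo}, and \ref{isoprop} applied to a Hausdorff manifold $M$ and its copy $N$ with an ``additional'' point (the paper leaves $M$ arbitrary rather than fixing $M=\mathbb{R}$, but that is immaterial). Your side remarks on how Lemmas \ref{isoprop} and \ref{diffeo} are proved also agree in substance with the paper's arguments.
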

\noindent Thus we have a many-to-one relationship, even up to diffeomorphism, between smooth not-necessarily-Hausdorff manifolds and algebras of smooth functions.  It follows, as claimed, that the Algebra to Geometry direction of the duality fails in this context, as it is hopeless to reconstruct a unique not-necessarily-Hausdorff manifold from an algebra of smooth functions.

Now consider the other direction, from Geometry to Algebra.  It turns out that this direction of the generalized Stone Theorem does hold, even for not-necessarily Hausdorff manifolds.

\begin{proposition}\label{geo-alg}
If not-necessarily-Hausdorff manifolds $M$ and $N$ are diffeomorphic, then $C^\infty(M)\cong C^\infty(N)$.
\end{proposition}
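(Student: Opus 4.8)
The plan is to verify that the pullback along a diffeomorphism is an algebra isomorphism; that is, to check that $C^{\infty}(\cdot)$, applied to an isomorphism, behaves functorially, and to note that nothing in this argument is sensitive to global topological hypotheses. First I would fix a diffeomorphism $\phi\colon M\to N$ and define the pullback map $\phi^{*}\colon C^{\infty}(N)\to C^{\infty}(M)$ by $\phi^{*}(f)=f\circ\phi$. The first thing to check is that this is well defined, i.e.\ that $f\circ\phi$ is a smooth scalar field on $M$ whenever $f$ is a smooth scalar field on $N$. On the conventions adopted here (following \citet{Lee}), smooth maps between manifolds are in particular continuous, and smoothness is tested in charts; since $\phi$ and $f$ are smooth, composing chart representatives shows $f\circ\phi$ is smooth. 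Crucially, this is a purely local computation, taking place inside coordinate charts on $M$, on $N$, and on $\mathbb{R}$, and it makes no use of the Hausdorff condition (or paracompactness, or second countability). This is precisely why the proposition can be stated without any of those hypotheses.

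Second, I would check that $\phi^{*}$ is a unital $\mathbb{R}$-algebra homomorphism. This is immediate from the fact that the operations on $C^{\infty}(M)$ and $C^{\infty}(N)$---addition, multiplication, and scalar multiplication---are all defined pointwise, together with the identities $(f+g)\circ\phi=(f\circ\phi)+(g\circ\phi)$, $(fg)\circ\phi=(f\circ\phi)(g\circ\phi)$, and $(cf)\circ\phi=c(f\circ\phi)$ for $c\in\mathbb{R}$, and the observation that the constant function $1$ on $N$ pulls back to the constant function $1$ on $M$.

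Finally, I would establish that $\phi^{*}$ is a bijection by exhibiting its inverse. Since $\phi^{-1}\colon N\to M$ is also a diffeomorphism, the same construction yields a homomorphism $(\phi^{-1})^{*}\colon C^{\infty}(M)\to C^{\infty}(N)$, $h\mapsto h\circ\phi^{-1}$. For any $h\in C^{\infty}(M)$ we have $\phi^{*}\bigl((\phi^{-1})^{*}(h)\bigr)=(h\circ\phi^{-1})\circ\phi=h$, and for any $f\in C^{\infty}(N)$ we have $(\phi^{-1})^{*}\bigl(\phi^{*}(f)\bigr)=(f\circ\phi)\circ\phi^{-1}=f$; so $\phi^{*}$ and $(\phi^{-1})^{*}$ are mutually inverse, and hence $\phi^{*}$ is an algebra isomorphism, giving $C^{\infty}(M)\cong C^{\infty}(N)$.

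I do not expect a real obstacle here: the content of the proposition is exactly that the Geometry-to-Algebra direction is the ``easy'' half of the duality, amounting to little more than the (contravariant) functoriality of $C^{\infty}$ and the general fact that functors carry isomorphisms to isomorphisms. The only point that warrants a moment's care in the non-Hausdorff setting is that the composite of smooth maps is again smooth; on the conventions in force---where ``smooth'' entails ``continuous''---this is the standard chart argument, untouched by the failure of the Hausdorff condition. The contrast with the Algebra-to-Geometry direction, where one must reconstruct points and the non-Hausdorff pathology recorded in Lemma~\ref{witnessprop} obstructs uniqueness, is just what Proposition~\ref{prop:mainresults} and this proposition together make precise.
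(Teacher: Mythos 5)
Your proof is correct and is essentially the same as the paper's: both construct the pullback $f\mapsto f\circ\phi$ along the diffeomorphism, verify it is an algebra homomorphism using the pointwise definitions of the operations, and exhibit $h\mapsto h\circ\phi^{-1}$ as its inverse. Your added remark that the argument is purely local and thus insensitive to the Hausdorff (and other topological) conditions matches the paper's observation following the proposition.
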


Observe that Prop. \ref{geo-alg} would still hold even if we relaxed second countability or paracompactness, by an essentially identical proof.  It remains an open question, to our knowledge, whether the analogues of Prop. \ref{prop:mainresults} would hold if, instead of dropping the Hausdorff condition, we drop other topological conditions, e.g. whether it holds for smooth Hausdorff manifolds that are not necessarily second countable or paracompact.


\section{Deflation Revisited, and Two More Results}\label{sec:deflation}

The results just presented, taken together, have consequences for the interpretation of GR.  If we adopt the Hausdorff condition, then the deflationary interpretation of spacetime points described in Section \ref{sec:intro} is a viable option for interpreting the points of a manifold representing our actual world: manifolds are (just) ways of encoding the possible configurations of matter; points reflect the ways in which matter configurations can agree or differ.  But if we expand to not-necessarily Hausdorff manifolds, this deflationary interpretation is no longer available---at least not for all models. We apparently cannot take spacetime points to be (nothing but) loci for differentiating possible matter configurations.

Instead, it seems that one must adopt an interpretation according to which at least some points (specifically, pairs of witness points) in non-Hausdorff spacetimes represent locations in space and time whose existence cannot be probed, even in principle, via exploring the possible distributions of matter.  The differences between those manifolds are not reflected in differences between (any) of the matter configurations.  Whatever the differences are, they must be characterized in some other way, such as a brute ontological difference between those points.

As we suggested above, this situation has a distinctly ``substantivalist'' character.  Witness points, on this interpretation, take on an ontological status independent of, and not exhausted by, spatio-temporal relations between matter properties.  It is hard to see how a relationalist or deflationist can make sense of these points.  Indeed, one might argue that if we had empirical evidence that non-Hausdorff manifolds were necessary for physics, it would strike a serious blow to the relationalist and deflationist positions.

How might the algebraically-minded relationalist reply?  One promising possibility would be to turn the arguments above on their head.  One might insist that the physical significance of the spacetime manifold is exhausted by the possible configurations of matter that could be defined on it.  From this perspective, there is no physical difference between universes represented by manifolds with isomorphic algebras of smooth functions.  A smooth manifold $M$ and that manifold with an ``additional'' $p$ (for instance) represent the same facts concerning the possible ways in which matter properties can be distributed.  Since matter fields cannot ``see'' the extra point, it has no physical significance.  It is not really there.

An algebraically-minded relationalist may take this line of thought even further. They might hope, for instance, that expanding the class of possibility to non-Hausdorff manifolds does not add distinct algebraic possibilities. More precisely, one might hope that every non-Hausdorff manifold would be an element of an equivalence class of manifolds with the same algebra of smooth functions; and that each of those equivalence classes would contain one (and, by Theorem \ref{stone}, only one, up to diffeomorphism) Hausdorff representative. If this were true, then expanding the space of models to include non-Hausdorff manifolds would not change the space of physically possible worlds, because the physical content of the additional models would be exhausted by that of the Hausdorff spacetimes.   Non-Hausdorff manifolds would turn out to be superfluous.

This hope is moot, however, because of the following result.

\begin{proposition}\label{novelty}There exists a non-Hausdorff manifold $M$ whose algebra of smooth functions $C^{\infty}(M)$ is not isomorphic to that of any Hausdorff manifold. \end{proposition}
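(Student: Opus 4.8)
The plan is to produce a single non-Hausdorff ``bifurcation'' whose algebra of smooth functions has a space of $\mathbb{R}$-points that fails to be a manifold. Concretely, I would take $M$ to be the real line with the open half-line $(0,\infty)$ doubled --- equivalently, glue two copies $\mathbb{R}_{1},\mathbb{R}_{2}$ of $\mathbb{R}$ along the common open set $(-\infty,0)$. Reasoning as in Lemma \ref{lem1}, this $M$ is a smooth, second countable, non-Hausdorff $1$-manifold, and the two ``origins'' $0_{1},0_{2}$ are witness points (while $t_{1}$ and $t_{2}$ are separated for every $t>0$).

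First I would identify $C^{\infty}(M)$ concretely. Restriction to the two chart copies sets up an algebra isomorphism $C^{\infty}(M)\cong A:=\{(g,h)\in C^{\infty}(\mathbb{R})\times C^{\infty}(\mathbb{R}): g|_{(-\infty,0)}=h|_{(-\infty,0)}\}$ with pointwise operations; Lemma \ref{witnessprop} is reflected here in the forced equality $g(0)=h(0)$ (indeed $g^{(k)}(0)=h^{(k)}(0)$ for all $k$).

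Next, suppose for contradiction that $A\cong C^{\infty}(M')$ for some Hausdorff manifold $M'$. I would invoke the reconstruction half of the generalized Stone correspondence (\citet{nestruev2002}; cf.\ \citet{Rosenstock+etal}): for a Hausdorff manifold, the space of $\mathbb{R}$-algebra homomorphisms into $\mathbb{R}$, with its natural topology, is homeomorphic (indeed diffeomorphic) to the manifold itself. Hence $|A|:=\mathrm{Hom}_{\mathbb{R}}(A,\mathbb{R})$, with this topology, would have to be a manifold. To compute $|A|$, observe that the two coordinate projections $\pi_{i}\colon A\to C^{\infty}(\mathbb{R})$ are surjective ring maps whose kernels satisfy $\ker\pi_{1}\cdot\ker\pi_{2}=0$; hence any $\mathbb{R}$-homomorphism of $A$ annihilates one of them and so factors through some $\pi_{i}$, and --- using that every $\mathbb{R}$-homomorphism $C^{\infty}(\mathbb{R})\to\mathbb{R}$ is an evaluation --- one finds that $|A|$ is, as a set, the ``branching line'': a copy of $(-\infty,0]$ from which two disjoint rays $(0,\infty)_{1}$ and $(0,\infty)_{2}$ emanate (this is $M$ with $0_{1}$ and $0_{2}$ identified).

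Finally I would show $|A|$ is not locally Euclidean at the branch point $b$. Using the function $f=(x,x)$ and bump functions supported in just one ray and flat at the origin, one checks that the open neighbourhood $\{\,|\varphi(f)|<1\,\}$ of $b$ is split, upon deleting $b$, into exactly three nonempty relatively open pieces (the negative reals and the two positive rays), each of which has $b$ in its closure; since a punctured chart neighbourhood of a point in an $n$-manifold has at most two connected components (and is connected when $n\ge 2$), this is a contradiction, so no such $M'$ exists. I expect the main obstacle to be the spectrum computation in the third step --- in particular ruling out ``extra'' homomorphisms of $A$ beyond the evaluations on the two copies --- together with the routine but slightly delicate point-set bookkeeping needed to separate the three branches in the spectrum topology; the one place genuine care is required is in handling functions that are flat at the origin.
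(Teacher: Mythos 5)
Your proposal is correct in its essentials, and it takes a genuinely different route from the paper, although it starts from the same example (the branched line obtained by gluing two copies of $\mathbb{R}$ along $(-\infty,0)$). The paper argues ``internally'': it exhibits a smooth $\varphi$ on $M$ with nowhere-vanishing differential and two zeros at non-witness points, and observes that no element with those algebraic properties can live in $C^{\infty}(N)$ for a connected Hausdorff $1$-manifold (line or circle) by the mean value theorem, with connectedness and dimension handled by brief side remarks. You argue ``externally'': you compute the character space of $A\cong C^{\infty}(M)$ --- your observation that $\ker\pi_1\cdot\ker\pi_2=0$, so every character factors through a projection and is an evaluation, is sound --- identify $|A|$ with the ``Y''-space, and show it is not locally Euclidean at the branch point, contradicting the Nestruev/Rosenstock reconstruction. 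Your three-pieces argument goes through once you intersect the three legs with a chart neighbourhood of $b$ inside $\{|\varphi(f)|<1\}$ (each leg meets it since $b$ lies in its closure); and it handles all dimensions $n$ uniformly, which the paper only asserts. The one caveat: the reconstruction theorem you invoke holds for Hausdorff \emph{paracompact/second countable} (realcompact) manifolds; for, e.g., the long line there are characters beyond evaluations, so $M'$ is then only a subspace of $|C^{\infty}(M')|$ and your contradiction needs an extra word. The paper's own proof carries an analogous implicit restriction (its ``line or circle'' classification), so this is a point to flag rather than a fatal flaw, but in a full write-up you should either restrict to that case or supplement the spectral argument.
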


\noindent The example we give is a one-dimensional manifold with two ``branches.'' This manifold has an associated algebra that is non-isomorphic to that of any Hausdorff manifold. In what follows, we will call non-Hausdorff manifolds whose algebras of smooth functions do not admit a representation as functions on any Hausdorff manifold \emph{detectably non-Hausdorff}.

There are several observations to make about this proposition.  First, it follows that one cannot collapse all non-Hausdorff manifolds into Hausdorff manifolds that are physically equivalent in the sense of admitting the same algebra of smooth functions.  Adding non-Hausdorff manifolds to the mix really does increase the space of possibilities, even if the physical significance of a manifold is entirely contained in its associated algebraic structure.

Second, this result has an immediate corollary, concerning the character of the algebras associated with detectably non-Hausdorff manifolds.  Recall that we said above that \citet{nestruev2002} provides necessary and sufficient conditions for a commutative algebra to have a representation as the algebra of smooth functions on a (Hausdorff) manifold.  The details require machinery that would be beyond the scope of this paper, but briefly, there are three conditions: an algebra admits a representation as the algebra of smooth functions on a (Hausdorff) manifold if and only if it is \emph{complete} \citep[\S 3.28]{nestruev2002}, \emph{geometric} \citep[\S 3.7]{nestruev2002}, and \emph{smooth} \citep[\S 4.1]{nestruev2002}.   Thus we have
\begin{corollary}
There exist non-Hausdorff manifolds $M$ for which $C^{\infty}(M)$ is not complete, geometric, and smooth.
\end{corollary}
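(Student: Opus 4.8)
The plan is to read this off immediately from Proposition \ref{novelty} together with the characterization theorem of \citet{nestruev2002} already cited in the discussion above. Recall that, by that theorem, a commutative $\mathbb{R}$-algebra $A$ is isomorphic to $C^{\infty}(M')$ for some smooth, Hausdorff, paracompact manifold $M'$ if and only if $A$ is complete, geometric, and smooth. In other words, these three conditions are jointly equivalent to admitting a representation as the algebra of smooth functions on some Hausdorff manifold. The corollary is then just the contrapositive of this equivalence applied to the example furnished by Proposition \ref{novelty}.

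Concretely, I would first invoke Proposition \ref{novelty} to fix a non-Hausdorff manifold $M$ whose algebra $C^{\infty}(M)$ is not isomorphic to $C^{\infty}(M')$ for \emph{any} Hausdorff manifold $M'$ — for instance, the one-dimensional two-branch manifold used to prove that proposition. I would then argue by contradiction: if $C^{\infty}(M)$ were complete, geometric, and smooth, then by the Nestruev characterization it would be isomorphic to $C^{\infty}(M')$ for some smooth, Hausdorff, paracompact manifold $M'$, contradicting the defining property of $M$. Hence $C^{\infty}(M)$ cannot enjoy all three properties simultaneously, which is exactly the assertion of the corollary.

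There is no real obstacle here; the only thing that requires care is bookkeeping. First, ``not complete, geometric, and smooth'' should be read as the negation of the conjunction, so the corollary claims only that \emph{at least one} of the three properties fails for the example, not that all three do. Second, one should make sure that the class of manifolds against which Proposition \ref{novelty} is stated matches the class appearing in the Nestruev theorem (smooth, Hausdorff, and paracompact or second countable); since Proposition \ref{novelty} already excludes \emph{every} Hausdorff manifold and the Nestruev reconstruction always produces a Hausdorff manifold, the two line up and no gap arises. A natural strengthening, not needed for the statement as given, would be to identify which of the three conditions actually fails for the specific two-branch example — one expects geometricity or smoothness to be the culprit, since witness points force every smooth function to take equal values there (Lemma \ref{witnessprop}) — but pinning this down is optional.
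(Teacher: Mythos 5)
Your proposal is correct and matches the paper's own treatment: the paper likewise obtains the corollary immediately from Proposition \ref{novelty} combined with the \citet{nestruev2002} characterization (complete, geometric, and smooth if and only if representable as $C^{\infty}$ of a Hausdorff manifold), with the negation read as the negation of the conjunction. The only difference is cosmetic: the paper goes on to pin down that it is specifically the smoothness condition that fails for the two-branch example, which you correctly flag as an optional strengthening rather than part of the corollary itself.
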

\noindent The assertion stated here is the corollary to Prop. \ref{novelty}.  But inspection of the conditions shows that it is the smoothness condition that fails, where smoothness is the requirement that there exists a finite or countable open covering $\{U_k\}$ of the dual space of an algebra $\mathcal{A}$ such that all of the restrictions of $\mathcal{A}$ to the elements of the covering are isomorphic to $C^{\infty}(\mathbb{R}^n)$ for some fixed $n$. (The dual space of an algebra $\mathcal{A}$, $|\mathcal{A}|$, consists of all algebra homomorphisms onto $\mathbb{R}$.)  This condition will fail for the dual spaces to the algebras of some non-Hausdorff manifolds.  First, observe that given any manifold $M$, every point $p\in M$ can be associated with an element $\hat{p}\in|C^{\infty}(M)|$, by taking $\hat{p}:f\mapsto f(p)$; and also that, by lemma \ref{witnessprop}, if $M$ is non-Hausdorff and $p,p'$ are witness points, then $\hat{p}=\hat{p}'$.  Now consider, again, the example of the manifold $M$ from the proof of Prop. 3.  Any open cover of the dual space to the algebra of smooth functions on $M$, $|C^{\infty}(M)|$, will include some open set that contains $\hat{p}=\hat{p}'$.  And we claim that the restriction of $C^{\infty}(M)$ to any open set containing that point will fail to be isomorphic to $C^{\infty}(\mathbb{R})$, for just the same reasons that $C^{\infty}(M)$ fails to be isomorphic to $C^{\infty}(\mathbb{R})$.

We do not at present have necessary and sufficient conditions for an algebra to have a representation of smooth functions on a not-necessarily-Hausdorff manifold.  Finding those would be a prerequisite to presenting an autonomous generalization of the theory of Einstein Algebras to include (all) non-Hausdorff manifolds.

Let us return, now, to the motivation for stating Prop. \ref{novelty}.  The question at issue was whether \emph{all} non-Hausdorff manifolds can be associated with some Hausdorff manifold by first passing to the associated algebra of smooth functions and then identifying the (unique) Hausdorff manifold on which that algebra could be represented. That idea failed.  But it is nonetheless true that \emph{some} non-diffeomorphic non-Hausdorff manifolds can be associated with Hausdorff manifolds in this way.

This suggests that once we allow for non-Hausdorff manifolds, something like Earman's original idea \citep{earman1977,earman1989} about the relationship between geometric and algebraic approaches gets traction after all: it would seem that not-necessarily-Hausdorff manifolds have \emph{more structure} than algebras of smooth functions, since there are non-diffeomorphic not-necessarily-Hausdorff manifolds with the same algebras associated with them.  The sense of ``excess structure'' here would be that of \citet{weatherallUG}: there would be two formulations of a theory with the same empirical content, but where there exist distinct, non-equivalent models of one corresponding to a single model of the other.  Moving to the algebraic approach would be a way to eliminate that ``excess structure,” because  isomorphic algebras would correspond to equivalence classes of manifolds, where those equivalence classes would include non-diffeomorphic manifolds.


These ideas can be made precise using the criteria of structural comparison developed by \citet{baezsps} and exported to structural comparisons of physical theories by \citet{weatherallUG,weatherallNG}.\footnote{See \citet{BarrettSPS} for a detailed justification of these criteria in the case of first order theories.}  To do so, we first introduce two categories and then consider a functor between them.\footnote{For readers interested in further background reading on category theory, see \citet{Leinster}.}  Our structural analysis will concern the formal properties of this functor.\footnote{Strictly speaking, the functor we consider will be contravariant, which means it ``reverses'' arrows.  Thus, we will be using an extended version of the criterion for structural comparison that has previously been considered in the philosophy of physics literature.  However, just as \citet{Rosenstock+etal} argued that a categorical duality has much the same significance for questions of theoretical equivalence as a categorical equivalence does, we suggest that the upshot of the analysis does not change for the fact that the functor is contravariant.}  According to the standard of comparison under consideration, the objects of some category \textbf{A} of mathematical gadgets has more \emph{structure} than those of another, \textbf{B}, relative to a choice of functor $F:$\textbf{A}$\rightarrow$\textbf{B}, if $F$ is not \emph{full}, i.e., its induced action on hom sets is not surjective; more \emph{stuff} if $F$ is not \emph{faithful}, i.e., its induced action on hom sets is not injective; and more \emph{properties} if $F$ is not essentially surjective, i.e., surjective up to isomorphism in \textbf{B}.  Conversely, if $F$ is full, faithful, and essentially surjective, then \textbf{A} and \textbf{B} are equivalent (or dual) categories, reflecting that their objects have the same structure.

\citet{weatherallUG} proposes extending this idea to physical theories by considering functors $F$ between categories of models of physical theories, where $F$ is required to preserve the empirical consequences of the models.  One formulation of a theory has more structure than another if there is a functor from the models of the first to the models of the second that preserves empirical consequences, but fails to be full.  Intuitively speaking, this means that there are models that are equivalent in the second category, but their corresponding models are not equivalent in the first category. The key example Weatherall considers is formulations of electromagnetism on Minkowski spacetime using vector potentials and formulations using Faraday tensors, where the former is shown to have more structure than the latter.

We will follow \citet{Rosenstock+etal} closely here---except that we will not worry about metrics or other structure defined on manifolds, or their analogs on algebras.  We will consider just categories of not-necessarily-Hausdorff manifolds and algebras of smooth functions on such manifolds, defined as follows.

First, we consider the category \textbf{nnHMan}, which has as objects not-necessarily-Hausdorff manifolds (of any finite dimension); and has as arrows diffeomorphisms.\footnote{One could of course consider a richer category, with more than just isomorphisms, as do \citet{Rosenstock+etal}.  But any acceptable choice for that category would have diffeomorphisms as isomorphisms; and a duality between that enriched category and a suitable algebraic category would imply a duality between the groupoids we consider here.  Thus if the equivalence of groupoids fails, so too must any stronger duality.} Now consider the category \textbf{nnHAlg}, whose objects are the algebras of smooth functions on the objects of \textbf{nnHMan} and whose arrows are algebra isomorphisms.\footnote{Note that, as discussed above, we do not have an independent characterization of the objects of \textbf{nnHAlg}, but that will not matter for the present discussion.}  Finally, observe that this construction gives rise to a natural contravariant functor, $F:$\textbf{nnHMan}$\rightarrow$\textbf{nnHAlg}, directly analogous to the one considered by \citet{Rosenstock+etal}, that takes each object $M$ of \textbf{nnHMan} to $C^{\infty}(M)$ and takes each arrow $\alpha:M\rightarrow N$ to the algebra isomorphism $\tilde{\alpha}:C^\infty(N)\rightarrow C^\infty(M)$ as defined in the proof of Prop. \ref{geo-alg} in the Appendix.  One can easily confirm that $F$ is a functor, since it preserves identity maps and composition.  Again, by analogy to arguments in \citet{Rosenstock+etal}, this functor would induce a functor that preserves the empirical consequences of the theories were we to extend our analysis to consider spacetimes and Einstein algebras. So its properties are probative for present purposes.

We now have the following proposition.
\begin{proposition}\label{functor}
The functor $F:$\textbf{nnHMan}$\rightarrow$\textbf{nnHAlg} as defined above is neither full nor faithful.
\end{proposition}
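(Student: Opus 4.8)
The plan is to produce explicit witnesses to the failure of fullness and of faithfulness, both drawn from the ``additional point'' construction of Section \ref{sec:main}, so that the lemmas already proved do almost all the work. Throughout, bear in mind that $F$ is \emph{contravariant}: for objects $A,B$ of \textbf{nnHMan} its action on hom-sets is a map $\mathrm{Hom}_{\textbf{nnHMan}}(A,B)\to\mathrm{Hom}_{\textbf{nnHAlg}}(F(B),F(A))$, and $F$ is full (resp.\ faithful) iff every such map is surjective (resp.\ injective).

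For the failure of fullness, I would let $M$ be any Hausdorff manifold (say $\mathbb{R}$) and let $N$ be $M$ with an ``additional'' point $p$. By Lemma \ref{diffeo}, $M$ and $N$ are not diffeomorphic, so $\mathrm{Hom}_{\textbf{nnHMan}}(M,N)=\emptyset$. By Lemma \ref{isoprop}, however, $C^\infty(M)\cong C^\infty(N)$, so $\mathrm{Hom}_{\textbf{nnHAlg}}(C^\infty(N),C^\infty(M))=\mathrm{Hom}_{\textbf{nnHAlg}}(F(N),F(M))$ contains an isomorphism and is in particular non-empty. A function from the empty set to a non-empty set is never surjective, so the relevant hom-set map fails to be surjective and $F$ is not full.

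For the failure of faithfulness, I would keep the same manifold $N=M$ with an ``additional'' $p$ and consider the ``swap'' map $\sigma\colon N\to N$ defined by $\sigma(p)=p'$, $\sigma(p')=p$, and $\sigma(q)=q$ for all other $q$. The one step requiring a little care is checking that $\sigma$ is a diffeomorphism of $N$: continuity (in both directions, since $\sigma^{-1}=\sigma$) follows by inspecting preimages of the basic open sets of $N$, where an open set of $M$ containing $p$ is carried to its mirror set and conversely; and smoothness follows because, in a chart $(U,\varphi)$ of $M$ through $p$ together with its mirror chart $(U',\varphi')$, the transition $\varphi'\circ\sigma\circ\varphi^{-1}$ is literally the identity map, while $\sigma$ is the identity away from $\{p,p'\}$. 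Since $\sigma(p)=p'\neq p$, we have $\sigma\neq\mathrm{id}_N$. But $p$ and $p'$ are witness points of $N$, so Lemma \ref{witnessprop} gives $f(p)=f(p')$ for every $f\in C^\infty(N)$, whence $f\circ\sigma=f$ for all such $f$; that is, $\tilde{\sigma}=\widetilde{\mathrm{id}_N}=\mathrm{id}_{C^\infty(N)}$. So $F$ carries the two distinct arrows $\mathrm{id}_N$ and $\sigma$ of $\mathrm{Hom}_{\textbf{nnHMan}}(N,N)$ to the same arrow of $\mathrm{Hom}_{\textbf{nnHAlg}}(C^\infty(N),C^\infty(N))$, and $F$ is not faithful.

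I do not anticipate a genuine obstacle here: the only non-bookkeeping content is the verification that $\sigma$ is a diffeomorphism, and that is routine unpacking of the topology and atlas supplied by the ``additional point'' construction. The main thing to be careful about is the contravariance of $F$, so that the source and target of the hom-set maps in the fullness argument are not inadvertently transposed.
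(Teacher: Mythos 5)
Your proposal is correct and follows essentially the same route as the paper's proof: fullness fails because $\mathrm{Hom}(M,N)$ is empty while $\mathrm{Hom}(F(N),F(M))$ is not (Lemmas \ref{diffeo} and \ref{isoprop}), and faithfulness fails because the witness-point swap is a non-identity diffeomorphism sent to the identity algebra map (Lemma \ref{witnessprop}). The only differences are cosmetic: the paper uses $\mathbb{R}^4$ with an ``additional'' point for the faithfulness witness and verifies the swap is a diffeomorphism by composing with smooth functions, whereas your chart-by-chart check is, if anything, a bit more explicit.
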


This proposition shows that the functor $F$ forgets structure, because it is not full.  Since we expect $F$ to determine a functor from not-necessarily-Hausdorff spacetimes to their associated Einstein algebras that preserves empirical content (assumed to be exhausted by the algebraic structure), it would follow that non-Hausdorff spacetimes have not just more structure, but also \emph{excess} structure. There is another formalization of the theory that captures the same empirical content with less structure.

But the proposition shows more than this.  $F$ also forgets \emph{stuff} because it is not faithful.\footnote{See \citet{nguyen+etal} and \citet{Bradley+Weatherall} for a recent discussion about the significance of forgetting stuff.}  The intuition is that a non-trivial diffeomorphism that permutes a pair of witness points but leaves everything else fixed would map to the identity arrow under the action of this functor. So witness points contribute extra ``stuff'' to non-Hausdorff manifolds, leading to additional symmetries.  Thus, we can think of non-Hausdorff manifolds as exhibiting \emph{both} excess structure and excess ontological commitments. Moving to algebras excises both.


\section{A Tripartite Division}\label{sec:conjecture}

In this section, we turn to a question that arises in light of Prop. \ref{novelty}, and whose significance is informed by Prop \ref{functor}.  To motivate this question, first observe that one consequence of Prop. \ref{novelty} is that there seem to be two classes of non-Hausdorff manifolds: ones, like $M$ with an ``additional'' $p$, whose algebra of smooth functions is isomorphic to that of a Hausdorff manifold; and ones, like the example in the proof of Prop. \ref{novelty}, whose algebra is not.

There is another property shared by some, but not all, non-Hausdorff manifolds that has sometimes been used to classify them.  This is the property of admitting \emph{bifurcate curves}, as first introduced by \citet{hajicek1971bifurcate}.\footnote{See also \citet{hajicek1971causality}.  The bifurcate curves we consider here are ``type 2'' in Hajicek's terminology.  Note that Hajicek does not explicitly require curves to be injective, but his proof of the Theorem in \citep[\S 3]{hajicek1971bifurcate} apparently requires it.  That said, requiring injectivity is not substantive, since any manifold with would-be bifurcate curves that are non-injective (e.g., self-intersecting or coming to rest) also admits bifurcate curves in the present sense (just restrict the domains and reparameterize). We are grateful to an anonymous reviewer for encouraging us to think more about these points.}

\begin{definition}[Bifurcate Curves, \citet{hajicek1971causality}]
    A manifold $M$ has a bifurcate curve iff there exist (smooth, injective) curves $\gamma_i:[0,1]\to M$ for $i=1,2$ and some $t\in(0,1]$ such that $\gamma_1(s)=\gamma_2(s)$ for all $s<t$ and yet $\gamma_1(s)\neq\gamma_2(s)$ for all $s\geq t$.
\end{definition}

A manifold has a bifurcate curve iff there exist two curves that agree up to a point, and disagree at that point (and for all subsequent points).  Whether a non-Hausdorff manifold contains bifurcate curves can be seen as a test for its physical reasonableness.  This is because bifurcate (timelike) curves reflect a kind of indeterminism: a particle traversing such a curve, upon reaching the bifurcation point, could equally well continue in either of two (or more) directions, with nothing in the structure of spacetime distinguishing those possible continuations from one another. The thought, then, is that insofar as it is this sort of branching that makes (some) non-Hausdorff manifolds seem ``unphysical'', then those non-Hausdorff spaces that are free of bifurcate curves may be physically reasonable after all \citep[c.f.][]{Manchak}.

All of the examples of non-Hausdorff manifolds considered thus far in the present paper do admit bifurcate curves.\footnote{For the manifold with an ``additional'' $p$, consider any curve whose endpoint is $p$ and observe that there is another curve agreeing with it everywhere except that its endpoint is $p'$.  That the example in the proof of Prop. \ref{novelty} exhibits such curves, meanwhile, is essential to the proof.}  Since we have seen examples of both detectably non-Hausdorff manifolds and non-detectably-non-Hausdorff manifolds, it follows that admitting bifurcate curves can be neither necessary nor sufficient for detectability.  But what is not clear is how detectability relates to the \emph{absence} of bifurcate curves.  That is: are there non-Hausdorff manifolds with no bifurcate curves that are detectably non-Hausdorff?  Are all such manifolds detectably non-Hausdorff?

We propose the following conjecture.
\begin{conjecture}There exist detectably non-Hausdorff manifolds with no bifurcate curves.
\end{conjecture}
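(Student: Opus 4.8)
The plan is to exhibit the example directly: a non-Hausdorff manifold built from the ordinary plane with two origins, but with the two copies glued with an ``infinite twist'' near the origin. Concretely, let $\phi\colon\mathbb{R}^2\setminus\{0\}\to\mathbb{R}^2\setminus\{0\}$ be the diffeomorphism sending the point with polar coordinates $(r,\theta)$ to $(r,\theta+1/r)$ --- i.e.\ $\phi$ rotates the point $z$ about the origin through the angle $1/|z|$. Let $M$ be obtained from two copies $\mathbb{R}^2_{(1)},\mathbb{R}^2_{(2)}$ of the plane by identifying $z\in\mathbb{R}^2_{(1)}\setminus\{0\}$ with $\phi(z)\in\mathbb{R}^2_{(2)}\setminus\{0\}$. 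Since $\phi$ preserves the modulus, each punctured ball about the origin in one copy matches a punctured ball about the origin in the other, so $0_{(1)},0_{(2)}$ cannot be separated by open sets, while every other pair of points can; thus $M$ is a (second countable, connected) non-Hausdorff smooth $2$-manifold, covered by the two evident charts, whose only witness points are $0_{(1)}$ and $0_{(2)}$. Gluing by the identity instead would recover the usual, otherwise tame, ``plane with two origins'', which admits bifurcate curves and has $C^\infty\cong C^\infty(\mathbb{R}^2)$; the twist is what spoils both features.

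Next I would show $M$ has no bifurcate curves. The only candidate branch locus is $\{0_{(1)},0_{(2)}\}$, so a bifurcate curve would consist of smooth $\gamma_1,\gamma_2$ agreeing on $[0,t)$ with, say, $\gamma_1(t)=0_{(1)}$, $\gamma_2(t)=0_{(2)}$. Reading the common portion in the first chart gives a smooth $z\colon[t-\delta,t]\to\mathbb{R}^2$ with $z(t)=0$, $z(s)\neq 0$ for $s<t$; reading the second curve in the second chart, and using that a shared point $z$ has second-chart coordinate $\phi(z)$, shows $\phi\circ z$ likewise extends to a smooth curve hitting $0$ at time $t$. Writing $z(s)=r(s)e^{i\alpha(s)}$ for $s<t$, so $\phi\circ z$ has modulus $r(s)$ and argument $\alpha(s)+1/r(s)$, the polar identity $|w'|^2=(\mathrm{radius})'^2+(\mathrm{radius})^2(\mathrm{angle})'^2$ applied to $z$ and to $\phi\circ z$ forces $|r\alpha'|$ and $|r\alpha'-r'/r|$ to be bounded near $t$; hence $|(\log r)'|=|r'/r|$ is bounded near $t$. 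But $r(s)\to 0$ as $s\to t^-$, so $\log r(s)\to-\infty$, impossible for a function with bounded derivative on $[t-\delta',t)$. So no bifurcate curve exists. (The rate $1/r$ is the threshold: rotation by any $g(r)$ with $g(r)\gtrsim 1/r$ as $r\to 0$ works, by essentially the same estimate.)

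It remains to show $M$ is detectably non-Hausdorff. A smooth function on $M$ is a pair $(f_1,f_2)$, $f_i\in C^\infty(\mathbb{R}^2)$, with $f_1=f_2\circ\phi$ off the origin; modulus-preservation and continuity force $f_1(0)=f_2(0)$, and $f_2=f_1\circ\phi^{-1}$ is determined by $f_1$. The content is the requirement that $f_1\circ\phi^{-1}$ be smooth at $0$; one shows (using that $\phi^{-1}$ multiplies the degree-$k$ angular Fourier component at radius $r$ by $e^{-ik/r}$, whose oscillation destroys smoothness unless that component is flat) that this holds exactly when $f_1$ agrees with a function of $x^2+y^2$ to infinite order at $0$. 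Consequently, at the glued point $\ast\in|C^\infty(M)|$ the algebraic cotangent space $\mathfrak{m}_\ast/\mathfrak{m}_\ast^2$ is one-dimensional (recording only the coefficient of $x^2+y^2$), whereas at any shared point it is two-dimensional. Since $C^\infty(M)$ has no non-trivial idempotents, any $M'$ with $C^\infty(M')\cong C^\infty(M)$ is connected, hence a manifold of a fixed dimension $n$ with $\dim\mathfrak{m}_p/\mathfrak{m}_p^2=n$ at every point --- contradicting the coexistence of cotangent dimensions $1$ and $2$. So $M$ is detectably non-Hausdorff, and, having no bifurcate curves, witnesses the conjecture. (One can also phrase detectability via Nestruev's conditions, as in the proof of Prop.\ \ref{novelty}: $|C^\infty(M)|$ is homeomorphic to $\mathbb{R}^2$, but the localization of $C^\infty(M)$ near $\ast$ is not isomorphic to $C^\infty(\mathbb{R}^2)$, so the smoothness condition fails.)

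The step I expect to be the main obstacle is the precise characterization of $C^\infty(M)$ --- in particular showing that $f_1\circ\phi^{-1}$ is smooth at $0$ \emph{only} when the $\infty$-jet of $f_1$ is rotationally symmetric; the ``if'' direction is a routine flat-function estimate, but the ``only if'' direction requires ruling out cancellations among different angular modes and handling the non-jet part of $f_1$. Secondary points of care are the chart bookkeeping in the no-bifurcate-curves argument (confirming that ``smooth curve into $M$'' genuinely forces both $z$ and $\phi\circ z$ to be $C^1$ up to the endpoint), and, as in Prop.\ \ref{novelty}, confirming that detectability survives passage to arbitrary, possibly non-paracompact, Hausdorff $M'$ --- the idempotent/cotangent-dimension argument is robust to this provided one checks that the relevant maximal ideal of $C^\infty(M')$ is a point ideal. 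The remaining topological niceties (second countability, and paracompactness along the lines of Lemma \ref{lem1}) are easily arranged.
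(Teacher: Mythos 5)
The statement you are proving is left as a conjecture in the paper: the authors give no proof, only nominate Misner spacetime with two extensions as a likely witness and note (fn.~\ref{referee}) that \citet{dougherty2023hausdorff} subsequently established the claim (and more). So your proposal is not an alternative route to an existing argument but an attempt to supply a proof the paper does not contain, using a different candidate---a plane with two origins glued by the infinite twist $\phi(r,\theta)=(r,\theta+1/r)$ rather than Misner spacetime. The architecture is sound, and the no-bifurcate-curves half is essentially complete: a branch point must be a pair of witness points, the only witness pair is the two origins, injectivity of the curves guarantees $z(s)\neq 0$ for $s<t$ so the polar decomposition is available, and the bounded-$\left|(\log r)'\right|$ estimate correctly rules out a curve that is $C^1$ up to the endpoint in both charts.

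The genuine gap is in the detectability half, and you have located it yourself: everything rests on the claim that $f_1\circ\phi^{-1}$ is smooth at the origin \emph{only if} the $\infty$-jet of $f_1$ is rotationally invariant, and this is asserted, not proved. Without it the computation $\dim\mathfrak{m}_\ast/\mathfrak{m}_\ast^2=1$ at the glued point---which is the entire content of detectability---is unsupported. The claim is true and can be closed along the lines you gesture at: let $P_N$ and $Q_N$ be the degree-$N$ Taylor polynomials of $f_1$ and of $f_1\circ\phi^{-1}$; since $\phi$ preserves $r$, both remainders are $o(r^N)$ uniformly in $\theta$, so $P_N\circ\phi^{-1}-Q_N=o(r^N)$, whence each angular Fourier mode satisfies $p_k(r)e^{-ik/r}-q_k(r)=o(r^N)$; evaluating along radii $r_n\to 0$ with $e^{-ik/r_n}=+1$ and $-1$ forces $p_k\equiv 0$ for $k\neq 0$, and the $k=0$ part of the jet is a series in $x^2+y^2$. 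Until something of this sort is written out (together with the Hadamard-type division-by-$r^2$ bookkeeping behind the cotangent computation), the proof is incomplete. A secondary unresolved point: your final contradiction needs that every $\mathbb{R}$-algebra character of $C^\infty(M')$, for $M'$ connected Hausdorff but possibly non-paracompact, has kernel of constant cotangent dimension. For point evaluations this is standard, but non-paracompact manifolds admit characters that are not point evaluations (the ``end'' character of the long line has kernel with cotangent dimension $0$), so you must either rule out an exotic character of cotangent dimension $1$ on a non-paracompact Hausdorff surface or weaken the conclusion to second countable Hausdorff comparisons---which is weaker than the paper's notion of detectability, whose Prop.~\ref{novelty} quantifies over \emph{all} Hausdorff manifolds (and whose own mean-value-theorem argument happens to survive this generality).
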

We suggest that the classic example of Misner spacetime with two extensions \citep[pp. 173-4]{Hawking+Ellis} is a likely candidate for a non-Hausdorff manifold that is free of bifurcate curves, but whose algebra of smooth functions has no Hausdorff representation.

Suppose this conjecture holds.\footnote{\label{referee} Since we first made this conjecture, \citet{dougherty2023hausdorff} has given an argument showing not only that this is true, but also that non-Hausdorff manifolds are either detectably non-Hausdorff or with bifurcate curves.}    We would then have a (at least) tripartite division of non-Hausdorff manifolds.  On one extreme, there would be non-detectable non-Hausdorff manifolds.  These would exhibit the most mild non-Hausdorff flavors.  (This is so even though some of them have bifurcate curves!)  One might reasonably argue either that considering such manifolds is harmless; or one might argue that they are otiose because they introduce additional structure that makes no difference to the empirical structure of the theory.  But on either view, the non-Hausdorff character is highly localized, and arguably eliminable. Indeed, the algebraic relationalist might even argue that expanding the space of models of GR to include non-detectable non-Hausdorff manifolds would not truly change the space of physical possibilities contemplated by the theory at all, insofar as the physical significance of a manifold is exhausted by the field configurations it supports.

On the other extreme, we would have detectably non-Hausdorff manifolds with bifurcate curves.  Admitting these manifolds to the space of physical possibility would carry significant costs, such as failures of determinism for particles following (causal) bifurcate curves.  Moreover, these pathological possibilities would be manifest in the algebra of smooth fields on the manifold.

But we would also have another class of non-Hausdorff spacetimes, ones that do not admit such bifurcate curves, but which nonetheless have genuinely novel non-Hausdorff behavior, as reflected in their associated algebras of smooth functions. We do not have a view regarding whether these should count as ``physically reasonable.''  It would certainly be of interest to explore the properties of such manifolds.


Table \ref{table1} summarizes the situation.

\begin{table}[h!]
\centering
\begin{tabular}{|p{.25\textwidth} |p{.1\textwidth} | p{.21\textwidth} | p{.1\textwidth} | p{.16\textwidth} |}
\hline
& $\mathbb{R}$& Misner Spacetime with Two Extensions & $\mathbb{R}\cup \{p\}$ & Branching $\mathbb{R}$\\
\hline
 Hausdorff? & Yes & No & No & No\\
\hline
Free of Bifurcate Curves? & Yes & Yes & No & No\\
\hline
Shares Algebra of Smooth Functions with a Hausdorff Manifold? & Yes & No\footnotemark & Yes & No\\
\hline
\end{tabular}
\caption{A tripartite division of not-necessarily-Hausdorff manifolds.}
\label{table1}
\end{table}

\section{Conclusion}\label{sec:conclusion}
\footnotetext{Recall fn. \ref{referee}.}
This paper has considered whether the duality between smooth manifolds and algebras of smooth functions thereon -- and, by extension, the duality between relativistic spacetimes and Einstein algebras -- generalizes to the not-necessarily-Hausdorff case.  We found that for not-necessarily-Hausdorff manifolds, it is not the case that if two manifolds have isomorphic algebras of smooth functions, then the manifolds must be diffeomorphic.  However, diffeomorphic not-necessarily-Hausdorff manifolds always have isomorphic algebras of smooth functions.

We then showed that there exist non-Hausdorff manifolds whose algebras of smooth functions are not isomorphic to that of any Hausdorff manifold.  It follows that the algebras associated with non-Hausdorff manifolds are not necessarily complete, geometric, and smooth in the sense of \citet{nestruev2002}; thus, expanding the collection of relativistic spacetimes to include ones on not-necessarily-Hausdorff manifolds would lead to a corresponding expansion of the collection of Einstein algebras.  Some non-Hausdorff manifolds exhibit genuinely novel behavior, as viewed from their algebras of smooth functions.

But not all of them do.  Once we consider not-necessarily-Hausdorff manifolds, each smooth algebra of functions on such manifolds is associated with a collection of non-diffeomorphic manifolds.  Thus, one can see the passage from not-necessarily-Hausdorff manifolds to algebras of smooth functions as taking equivalence classes of non-diffeomorphic manifolds.  This idea is reflected in the fact that the natural functor, analogous to the one analyzed by \citet{Rosenstock+etal}, from a certain category of not-necessarily-Hausdorff manifolds to a corresponding category of smooth algebras fails to be full (or faithful).  Thus, there is a sense in which the algebraic approach posits less structure than the geometric approach in the not-necessarily-Hausdorff context.  If the empirical content of not-necessarily-Hausdorff GR is exhausted by the algebras of smooth functions on spacetimes, as apparently is the case for standard GR, then not-necessarily-Hausdorff relativistic spacetimes have excess structure.

Finally, we posed a question about the relationship between our results and the existence of bifurcate curves; and we posed a conjecture that, if true, would establish a hierarchy of degrees of non-Hausdorff behavior.

We will conclude with some brief reflections on the philosophical significance of the arguments here.  Once again, perhaps the most important takeaway is that once we move to not-necessarily-Hausdorff manifolds, something like Earman's original picture of the relationship between manifold-based ``substantivalist'' models and algebra-based ``relationalist'' models of GR is realized after all, in two ways.  First, the manifold approach now seems to require an interpretation on which spacetime points are ontologically prior to, and independent of, configurations of matter, in the strong sense that there can be points that differ from one another, i.e., are distinct individuals, but be such that those differences cannot be reflected in any difference in field valuations.  Matter configurations cannot detect the fact that these points are distinct.  It is hard to make sense of spacetime points in this context other than by adopting a version of substantivalism.  Second, conversely, algebras in this context have \emph{less} structure than manifolds, and so one can excise structure by moving from a geometric formulation to an algebraic one.  Thus, we recover a sense in which the ``relationalist'' algebraic approach is more structurally parsimonious than the ``substantivalist'' geometric approach.

It is worth reflecting on how different the substantivalism-relationalism debate looks when viewed from the present perspective, as compared to how that debate has unfolded in the context of GR since the work of \citet{Friedman}, \citet{Field}, \citet{earman1986}, \citet{Earman+Norton}, and other giants of the last quarter of the 20th century.  To see the point most starkly, consider that it is tempting to call the version of substantivalism currently under consideration ``manifold substantivalism''.  But that expression would not be appropriate given the literature, since ``manifold substantivalism'', as characterized by \citet{Earman+Norton}, denies what is called ``Leibniz equivalence,'' which claims that isometric spacetimes represent the same physical possibilities. The form of substantivalism under consideration here, by contrast, can fully embrace Leibniz equivalence.  Since \citet[522]{Earman+Norton} famously argue that one must either deny Leibniz equivalence or deny substantivalism, the form of substantivalism under consideration here would not count for them as a substantivalist theory at all!  This is so despite the fact that it endorses the claim that spacetime points are individuals whose existence is independent of possible matter configurations.

These considerations suggest a different focus for the debate between the substantivalist and relationalist.  The key issue becomes whether it makes sense to posit additional ``places'' in space and time, even if those places cannot be registered, even in principle, in the space of possible matter configurations.  The substantivalist, here, would say ``yes''; the relationalist ``no''.

Another takeaway is that we now have yet another example of how, by expanding the possibility space of GR, we can no longer take for granted previous results we relied on (c.f. \citet{ManchakTheories}). The duality between relativistic spacetimes and Einstein algebras is only one of the many results that can fail in not-necessarily-Hausdorff spaces.  Perhaps most interestingly, the present case seems to be an example in (non-quantum) physics where the results that fail matter for plausible interpretations of the theory.  Indeed, expanding the space of possible models changes the options for interpreting structures of our actual world.  Suppose we adopt the \emph{prima facie} plausible principle that mathematical structures should be interpreted uniformly across the models of a theory. This principle of uniform interpretation would suggest that if non-Hausdorff spacetimes are possible, then insofar as they do not support deflationism, we have to change our strategies for interpreting spacetime points in all models of the theory---including ones representing our actual world.  Such action-at-a-distance in modal space is common in quantum theories, where there is a close relationship between the determinable properties and the space of (all) possible states.  But it is more difficult to think of similar examples in classical physics.

These considerations, taken together, suggest a novel argument for relationalism---or, at least, for algebraicism.  In the not-necessarily-Hausdorff case, by adopting the principle of uniform interpretation, the geometric approach suggests a non-deflationary, substantivalist interpretation of spacetime points, with excess structure as compared to the algebraic approach. Thus parsimony considerations would advise adopting an algebraic-relationalist view.  While one is not forced to that position by the same considerations in standard (Hausdorff) GR, one might argue that parsimony considerations should take into account not just the specific theory under considerations, but also other ``nearby'' theories.  For instance: one might argue that to properly understand Newtonian gravitation, one should consider not only the theory as classically conceived, but also in relation to other theories, such as GR or teleparallel gravity.  The fact that for ``small'' variations on standard GR, the duality between GR and Einstein algebras breaks down, then, might be taken as evidence that \emph{in general} algebraic-relationalist theories are more parsimonious than substantivalist-geometric ones, and thus one should in general prefer the algebraic variants, even in the special cases, such as Hausdorff GR, where they coincide.

Of course, one might reject the arguments just made, perhaps by rejecting the principle of uniform interpretation that we have advanced.  For instance, \citet{Ruetsche} argues against the viability of what she calls ``pristine interpretations'' for some physical theories, where a pristine interpretation is one that applies uniformly across all possible applications.  But from that perspective, the results above are no less interesting, since they offer a novel example of a case where one might prefer an ``adulterated interpertation'' of a (non-quantum) theory.


\section*{Acknowledgments}
\singlespacing
Thank you to David Malament, JB Manchak, Sarita Rosenstock, and Ellen Shi for many discussions about the material in this paper, with special thanks to Rosenstock for posing the question that led to our formulation of Prop. \ref{functor} and to Manchak for assistance with our discussion of bifurcate curves.  We are also grateful to two anonymous reviewers, whose careful and attentive comments greatly improved the manuscript.  Thank you to members of the Philosophy of Physics Research/Reading Group at UC Irvine for feedback. Wu presented this material to audiences in Paris and Krak\'ow; we are grateful for the helpful comments and questions from both audiences.
\newpage
\begin{appendix}
\section*{Appendix: Proofs of Propositions}\label{sec:appendix}
\noindent\textbf{Lemma \ref{lem1}} $(N,\mathcal{C'})$ as just defined [in Section \ref{sec:main}] is a smooth, non-Hausdorff manifold.  Moreover, it is second countable if $(M,\mathcal{C})$ is; and it is paracompact if $(M,\mathcal{C})$ is.

\begin{proof}
We first show that $N$ is non-Hausdorff. To do so, it suffices to show that every open set containing $p'$ and not $p$ is of the form $\{p'\}\cup (U\setminus \{p\})$ for some open set $U$ in $M$ containing $p$.  But this is immediate, since open sets on $N$ arise as arbitrary unions of elements of $\mathcal{B}$, which was the collection of open sets of $M$ and sets of the form $\{p'\}\cup( O\setminus\{p\})$ of $O$ open in $M$ and containing $p$. The only such unions that contain $p'$ and not $p$ are ones of the form $\left(\bigcup_{\alpha}\{(O_{\alpha}\}\right)\cup\left(\bigcup_{\beta}\{\{p'\}\cup O_{\beta}\setminus\{p\}\right)$, where the $O_{\alpha}$ sets are open in $M$ and do not contain $p$; and the $O_{\beta}$ sets are open in $M$ and do contain $p$.  But any such collection can be rewritten as $\{p'\}\cup (U\setminus \{p\})$, where $U=\left(\bigcup_{\alpha,\beta}\{O_{\alpha}\}\cup\{O_{\beta}\}\right)$ is open in $M$ and contains $p$.  Now let $O$ be any open set in $N$ containing $p$ and not $p'$, and let $O'=\{p'\}\cup (U\setminus \{p\})$ be an arbitrary open set in $N$ containing $p'$ and not $p$.  Then $O\cap O' = (O\cap U)\setminus \{p\}\neq \emptyset$, since $O$ and $U$ are both open sets containing $p$.  Thus $N$ is not Hausdorff.

We next show that $N$ is second-countable if $M$ is.  To see this, suppose that $\{O_n\}_{n\in \mathbb{N}}$  is a countable base for $M$.  Now let $I\subseteq \mathbb{N}$ be the collection of indices such that $p\in O_i$ for each $i\in I$.
 We then define a new collection of open sets, $\{\tilde{O}_i\}_{i\in I}$, where for each $i$, $\tilde{O}_i = (O_i\setminus\{p\})\cup \{p'\}$.  Then $\{O_n\}_{n\in \mathbb{N}}\cup \{\tilde{O}_i\}_{i\in I}$ is a countable base for $(N,\mathcal{C}')$.

Now we show that $N$ is paracompact if $M$ is. Suppose that $\{O_\alpha\}_{\alpha\in A}$ is an open cover of $N$. Let $A'\subseteq A$ be the collection of indices such that $p'\in O_{\alpha'}$ for each $\alpha'\in A'$. We define a new collection of open sets, $\{\tilde{O}_{\alpha'}\}_{\alpha'\in A'}$, where for each $\alpha'$, $\tilde{O}_{\alpha'}=(O_{\alpha'}\setminus\{p'\})\cup\{p\}$. Then $\{O_\alpha\}_{\alpha\in A\setminus A'}\cup \{\tilde{O}_{\alpha'}\}_{\alpha'\in A'}$ is an open cover for $M$. Since $M$ is paracompact by assumption, this open cover admits a locally finite refinement $\{V_\beta\}_{\beta\in B}$. Now let $B'\subseteq B$ be the collection of indices such that for each $\beta'\in B'$, $V_{\beta'}\not\subseteq O_\alpha$ for all $\alpha\in A$. We define a new collection $\{\tilde{V}_{\beta'}\}_{\beta'\in B'}$ where for each $\beta'$, $\tilde{V}_{\beta'}=(V_{\beta'}\setminus\{p\})\cup\{p'\}$. Lastly, pick an arbitrary $O_p\subseteq \{O_\alpha\}_{\alpha\in A}$ with $p\in O_p$. We claim that $F=\{V_\beta\}_{\beta\in B\setminus B'}\cup \{\tilde{V}_{\beta'}\}_{\beta'\in B'}\cup \{O_p\}$ is a locally finite refinement for $\{O_\alpha\}_{\alpha\in A}$. That $F$ is a refinement is immediate. We now show that it is locally finite. For every point $q\in M$, since $M$ is paracompact, there exists an open set $W_q$ containing $q$ that has non-empty intersection with only finitely many elements of $\{V_\beta\}_{\beta\in B}$. We consider two cases: (1) For $q\in M$, since $W_q$ (as an open set in $N$) has non-empty intersection with $\tilde{V}_{\beta'}$ if and only if it has non-empty intersection with $V_{\beta'}$ for each $\beta' \in B$ (otherwise we would have open sets in $N$ that separate $p$ and $p'$), $W_q$ has non-empty intersection with only finitely many elements of $F$. (2) The remaining case is $p'$. We define $W_{p'}=(W_p\setminus\{p\})\cup\{p'\}$. Observe that for each element in $F$, it has non-empty intersection with $W_{p'}$ if and only if it has non-empty intersection with $W_{p}$ (otherwise, again, we would have open sets in $N$ that separate $p$ and $p'$). Together with the fact established in (1), $W_{p'}$ has non-empty intersection with only finitely many elements of $F$.

Finally, we show that $(N,\mathcal{C'})$ is a smooth manifold. Recall that we defined $\mathcal{C}'$ as the collection of all $n-$charts on $N$ compatible with both $\mathcal{C}$ and all mirror charts on $N$. We first show that this set is non-empty, by showing that the elements of the collections consisting of $\mathcal{C}$ and the collection of mirror charts are all pairwise compatible.  Since $\mathcal{C}$ is an atlas, its elements are pairwise compatible by construction.  So we have two other cases to establish: (1) that each mirror chart is compatible with each chart in $\mathcal{C}$; and (2) that mirror charts are pairwise compatible.  We consider these in turn.
\begin{enumerate}
	\item Let $(U_1,\varphi_1)$ be a mirror chart and let $(U_2,\varphi_2)$ be in $\mathcal{C}$. Then $p\not\in U_1$ and there exists a $n$-chart, $(U,\varphi)$ in $\mathcal{C}$, for which $(U_1,\varphi_1)$ mirrors. We know that $(U,\varphi)$ and $(U_2,\varphi_2)$ are compatible because $(M,\mathcal{C})$ is a $n$-manifold. Suppose first that $p\in U_2$.  Then $\varphi_1\circ\varphi_2^{-1}:\varphi_2[U_1\cap U_2]\to \mathbb{R}^2$ is the same map as $\varphi\circ\varphi_2^{-1}: \varphi_2[(U\cap U_2)\setminus \{p\}]\to\mathbb{R}^2$, which is smooth and injective, and for similar reasons $\varphi_2\circ\varphi_1^{-1}:\varphi_1[U_1\cap U_2]\to \mathbb{R}^2$ is smooth and injective. So $(U_1,\varphi_1)$ and $(U_2,\varphi_2)$ are compatible. Now suppose $p\not\in U_2$. Then $U_1\cap U_2=U \cap U_2$, $\varphi_1(U_1\cap U_2)=\varphi(U\cap U_2)$, and $\varphi_2(U_1\cap U_2)=\varphi_2(U\cap U_2)$. It follows that $(U_1,\varphi_1)$ and $(U_2,\varphi_2)$ are compatible, since $(U,\varphi)$ and $(U_2,\varphi_2)$ are compatible,
	\item Now suppose $(U_1,\varphi_1)$ and $(U_2,\varphi_2)$ are both mirror charts. In this case, $p'\in U_1\cap U_2$; and there exist $n$-charts $(\tilde{U_1},\tilde{\varphi_1}),(\tilde{U_2},\tilde{\varphi_2})\in\mathcal{C}$ which $(U_1,\varphi_1)$ and $(U_2,\varphi_2)$ mirror, respectively. But observe that $\varphi_1\circ\varphi_2^{-1}=\tilde{\varphi_1}\circ\tilde{\varphi_2}^{-1}$ and $\varphi_2\circ\varphi_1^{-1}=\tilde{\varphi_2}\circ\tilde{\varphi_1}^{-1}$ are smooth and injective. So $(U_1,\varphi_1)$ and $(U_2,\varphi_2)$ are compatible.
\end{enumerate}
By construction, $\mathcal{C'}$ is also maximal. Thus, $(N,\mathcal{C}')$ is a smooth, non-Hausdorff manifold.

\end{proof}

\noindent\textbf{Lemma \ref{witnessprop}} If $p_1$ and $p_2$ are witness points of some smooth, non-Hausdorff manifold $M$, then for any smooth scalar field $f: M\to\mathbb{R}$, $f(p_1)=f(p_2)$.

\begin{proof}
Suppose, on the contrary, that for two witness points $p_1, p_2\in M$, there exists a smooth scalar field $f:M\to\mathbb{R}$, $f(p_1)\neq f(p_2)$. Then, since $\mathbb{R}$ is Hausdorff, there exist open sets $O_1$ and $O_2$ in $\mathbb{R}$ with $f(p_1)\in O_1$ and $f(p_2)\in O_2$ such that $O_1\cap O_2=\emptyset$. Since $f$ is smooth (and in particular, continuous), $f^{-1}(O_1)\ni p_1$ and $f^{-1}(O_2)\ni p_2$ are open sets in $M$. But since $p_1$ and $p_2$ are witness points, $f^{-1}(O_1)\cap f^{-1}(O_2)\neq\emptyset$. Let $q\in f^{-1}(O_1)\cap f^{-1}(O_2)$. $f$ maps $q$ to two distinct points in $\mathbb{R}$, which is a contradiction since $f$ is a function.

\end{proof}

\noindent\textbf{Lemma \ref{diffeo}} Let $M$ be a smooth, Hausdorff $n-$manifold, let $p\in M$ be some point, and let $N$ be $M$ with an ``additional'' $p$.  Then $M$ and $N$ are not diffeomorphic.

\begin{proof}
 Suppose, on the contrary, that there exists a diffeomorphism $\alpha: N\to M$. Then because $\alpha$ is injective, $\alpha(p)\neq\alpha(p')$. Since $M$ is Hausdorff, there exist open sets $O_1$ and $O_2$ in $M$ with $\alpha(p)\in O_1$ and $\alpha(p')\in O_2$ such that $O_1\cap O_2=\emptyset$. Since $\alpha$ is continuous, $\alpha^{-1}(O_1)\ni p$ and $\alpha^{-1}(O_2)\ni p'$ are both open sets in $N$. Since $p$ and $p'$ are witness points, $\alpha^{-1}(O_1)\cap \alpha^{-1}(O_2)\neq \emptyset$. Let $q\in \alpha^{-1}(O_1)\cap \alpha^{-1}(O_2)$. $\alpha$ maps $q$ to two distinct points in $M$, which is a contradiction since $\alpha$ is a function.\footnote{Thanks to an anonymous reviewer for proposing simplifications of the proofs for Lemmas \ref{witnessprop} and \ref{diffeo}.}
    \end{proof}

\noindent\textbf{Lemma \ref{isoprop}} Let $M$ be a smooth, not-necessarily-Hausdorff manifold and let $N$ be $M$ with some ``additional'' point $p$.  Then $C^{\infty}(M) \cong C^\infty(N)$.
\begin{proof}
We establish the isomorphism explicitly.  For every smooth map $f$ in $C^\infty(M)$, define a map $f':N\to\mathbb{R}$ by setting $f'(q)=f(q)$ for $q\in M$ and $f'(p')=f(p)$. We first show that $S$, the collection of such $f'$ for all $f\in C^\infty(M)$, is an algebra. Because all operations are defined pointwise, the only point to check is $p'$; but that the algebraic properties hold at $p'$ is immediately inherited from the fact that they hold at $p$. Furthermore, elements of $S$ are smooth maps from $N$ to $\mathbb{R}$. To see this, pick an arbitrary $n$-chart $(U,\varphi)$ in the atlas for $N$. If $p'\not\in U$, then $f'\circ\varphi^{-1}=f\circ\varphi^{-1}$ is smooth. If $p'\in U$, then $f'\circ\varphi^{-1}=f\circ\tilde{\varphi}^{-1}$ is smooth, where $(\tilde{U},\tilde{\varphi})$ is the $n$-chart in the atlas on $M$ for which $(U,\varphi)$ is its mirror $n$-chart.  Finally, we show that $S=C^\infty(N)$. To see this, suppose, on the contrary, that there exists a smooth map $h: N\to \mathbb{R}$ such that $h\not\in S$. This means either that $h(p')\neq h(p)$ or else $h_{N\setminus\{p'\}}$ is not a smooth map on $M$.  But the first cannot hold due to Lemma \ref{witnessprop}; and the second cannot hold because any smooth map on a manifold, when restricted to an open subset, determines a smooth map on that subset with its inherited manifold structure.  Therefore, $S=C^\infty(N)$.

Finally, it is easy to verify that the map $g:f\mapsto f'$ for all $f\in C^\infty(M)$ is an isomorphism between $C^\infty(M)$ and $C^\infty(N)$.
\end{proof}

\noindent\textbf{Proposition \ref{geo-alg}} If not-necessarily-Hausdorff manifolds $M$ and $N$ are diffeomorphic, then $C^\infty(M)\cong C^\infty(N)$.

\begin{proof}
Suppose that $M$ and $N$ are diffeomorphic. Then there exists a diffeomorphism $\alpha: M\to N$. From this we know that $\alpha$ and its inverse $\alpha^{-1}$ are both smooth.

Now define $\tilde{\alpha}: C^\infty(N)\to C^\infty(M)$ as follows. For any smooth map $\pi:N\to \mathbb{R}$, $\tilde{\alpha}(\pi)=\pi\circ\alpha$. The map $\tilde{\alpha}$ is well defined because, first, for all $\pi$ in $C^\infty(N)$, $\pi\circ\alpha$ is an element of $C^\infty(M)$ because $\alpha$ is also smooth. Second, if $\pi_1=\pi_2$, then $\pi_1\circ\alpha=\pi_2\circ\alpha$. This is because $\alpha$ is a function.

Furthermore, $\tilde{\alpha}$ admits an inverse $\tilde{\alpha}^{-1}:C^\infty(M)\to C^\infty(N)$ defined as follows. For any smooth map $\chi: M\to \mathbb{R}$, $\tilde{\alpha}{}^{-1}(\chi)=\chi\circ\alpha^{-1}$. $\tilde{\alpha}{}^{-1}$ is well-defined for analogous reasons. $\tilde{\alpha}$ and $\tilde{\alpha}{}^{-1}$ are inverses because for any smooth map $\pi\in C^\infty(N)$, $\tilde{\alpha}{}^{-1}(\tilde{\alpha}(\pi))=\pi\circ\alpha\circ\alpha^{-1}=\pi$.

It now suffices to show that $\tilde{\alpha}$ and $\tilde{\alpha}{}^{-1}$ are homomorphisms. We first show that $\tilde{\alpha}$ preserves addition. For any $p\in M$ and $\pi_1,\pi_2\in C^\infty(N)$, $\tilde{\alpha}(\pi_1+\pi_2)(p)=(\pi_1+\pi_2)\circ\alpha(p)=\pi_1\circ\alpha(p)+\pi_2\circ\alpha(p)=\tilde{\alpha}(\pi_1)(p)+\tilde{\alpha}(\pi_2)(p)$. The second equality holds because $\alpha(p)$ is a point in $N$ and we invoke the additive structure in $C^\infty(N)$. We then show that $\tilde{\alpha}$ preserves scalar multiplication. For any $p\in M$, $a\in\mathbb{R}$, and $\pi\in C^\infty(N)$, $\tilde{\alpha}(a\pi)(p)=(a\pi)\circ\alpha(p)=a(\pi\circ\alpha(p))=a\tilde{\alpha}(\pi)(p)$. Again, the second equality holds because $\alpha(p)$ is a point in $N$ and we invoke the scalar multiplication structure in $C^\infty(N)$.  Finally, we show that $\tilde{\alpha}$ preserves multiplication. For any $p\in M$ and $\pi_1,\pi_2\in C^\infty(N)$, $\tilde{\alpha}(\pi_1\cdot \pi_2)(p)=(\pi_1\cdot\pi_2)\circ\alpha(p)=(\pi_1\circ\alpha(p))\cdot(\pi_2\circ\alpha(p))=\tilde{\alpha}(\pi_1)(p)\cdot \tilde{\alpha}(\pi_2)(p)$. Like before, the second equality holds because $\alpha(p)$ is a point in $N$ and we invoke the multiplication structure in $C^\infty(N)$.

Identical arguments show that $\tilde{\alpha}{}^{-1}$ is a homomorphism.
\end{proof}

\noindent\textbf{Proposition \ref{novelty}}  There exists a non-Hausdorff manifold $M$ whose algebra of smooth functions $C^{\infty}(M)$ is not isomorphic to that of any Hausdorff manifold.

\begin{proof}
Note that it suffices to identify a (non-Hausdorff, connected) $n-$manifold $M$ and a smooth function $\varphi$ on $M$ such that no function with the algebraic properties of $\varphi$ lie in $C^{\infty}(N)$ for any connected Hausdorff $n-$manifold $N$.  (We can restrict attention to connected manifolds because no disconnected manifold has an algebra of smooth functions isomorphic to that of a connected manifold.  Likewise, manifolds of different dimension have non-isomorphic algebras.)  Consider, as $M$, the one-dimensional non-Hausdorff ``branching manifold'', i.e., the manifold constructed by taking two copies of $\mathbb{R}$ and identifying all points $x< 0$, but not identifying points $x\geq 0$.   Now consider a smooth function $\varphi$ on $M$ with the following two properties: first, $d_a\varphi$ is nowhere vanishing; and second, $\varphi$ has two zeros not at $M$'s witness points. (Such a function exists: choose any function $\varphi'$ on $\mathbb{R}$, with standard coordinates, that is everywhere increasing in the $+x$ direction, and which has a zero at some point $x> 0$; and let $\varphi$ be the corresponding function on $M$.  Then $\varphi$ will vanish twice, once on each fork, and it will have nowhere vanishing first derivative relative to any non-vanish vector field.)  Such a function would lie in two distinct maximal ideals of $C^{\infty}(M)$ even though its derivative is nowhere vanishing and thus it is strictly monotonically increasing in one direction.  But clearly no function with those (algebraic) properties can exist on any connected Hausdorff manifold (i.e., up to diffeomorphism, the line or circle), by the mean value theorem.
\end{proof}

\noindent\textbf{Proposition \ref{functor}}  The functor $F:$\textbf{nnHMan}$\rightarrow$\textbf{nnHAlg} as defined above is neither full nor faithful.
\begin{proof}
    To show that the functor fails to be full, it is sufficient to consider any arbitrary Hausdorff manifold $M$, any point $p\in M$, and the manifold $N$, which is $M$ with an ``additional'' $p$.  By Lemmas \ref{diffeo} and \ref{isoprop}, $M$ and $N$ are not diffeomorphic, and so there is no arrow $f:M\rightarrow N$.  But $F(M)=F(N)$ and so $1_{F(M)=F(N)}\in\hom(F(N),F(M))$.  Thus the action of $F$ from $\hom(M,N)$ to $\hom(F(N),F(M))$ fails to be surjective. So $F$ is not full.

    To show that the functor is not faithful, meanwhile, consider the manifold $Q$, which we defined as $\mathbb{R}^4$ with an ``additional'' $p$; and consider the action of $F$ on the diffeomorphism $\beta:Q\rightarrow Q$ that takes $p$ to the additional point $p'$, $p'$ to $p$, and leaves everything else unchanged. That this is a diffeomorphism can be see by considering that it is bijective, and that both it and its inverse, when composed with any smooth function $f:Q\rightarrow \mathbb{R}$, is smooth, since by Lemma \ref{witnessprop}, $f\circ\beta = f\circ\beta^{-1} = f$.  But now observe that, by the same fact, $F(\beta)=F(1_Q)=1_{F(Q)}$, since for every $f\in F(Q)$, $\tilde{\beta}(f)=f$.  Thus the action of $F$ from $\hom(Q,Q)$ to $\hom(F(Q),F(Q))$ fails to be injective. So $F$ is not faithful.
\end{proof}

\end{appendix}
\newpage
\bibliographystyle{agsm}
\bibliography{hausdorff}
\end{document}